\newlength\myindent
\newcommand\bindent{%
    \begingroup
    \setlength{\itemindent}{\myindent}
    \addtolength{\algorithmicindent}{\myindent}
}
\newcommand\eindent{\endgroup}
 \newtheorem*{rep@theorem}{\rep@title}
 \newcommand{\newreptheorem}[2]{%
 \newenvironment{rep#1}[1]{%
 \def\rep@title{#2 \ref{##1}}%
 \begin{rep@theorem}}%
 {\end{rep@theorem}}}
 \newtheorem{theorem}{Theorem}
 \newtheorem{lemma}[theorem]{Lemma}
 \newtheorem{proposition}[theorem]{Proposition}
 \newtheorem{definition}[theorem]{Definition}
\newtheorem{operation}[theorem]{Operation}
\DeclareMathOperator{\degree}{degree}
\DeclareMathOperator*{\E}{\mathbf{E}}
\newcommand{\lra}{Lenzen's routing algorithm}
\begin{document}
\title{Minimum-weight Spanning Tree Construction in $O(\log \log \log n)$ Rounds on the Congested Clique
  \thanks{This work is supported in part by National Science Foundation grant CCF-1318166.
  }}
\author{Sriram V. Pemmaraju \hspace{3em} Vivek B. Sardeshmukh\\ \small{Department of Computer Science, The University of Iowa, Iowa City, IA 52242}\\ 
\texttt{\{sriram-pemmaraju, vivek-sardeshmukh\}@uiowa.edu}}
\maketitle
\begin{abstract}
  This paper considers the \textit{minimum spanning tree (MST)} problem in the Congested Clique model and presents an algorithm that runs in $O(\log \log \log n)$ rounds, with high probability. 
  Prior to this, the fastest MST algorithm in this model was a deterministic algorithm due to Lotker et al.~(SIAM J on Comp, 2005) from about a decade ago.
  A key step along the way to designing this MST algorithm is a \textit{connectivity verification} algorithm that not only runs in $O(\log \log \log n)$ rounds with high probability, but also has low message complexity. 
  This allows the fast computation of an MST by running multiple instances of the connectivity verification algorithm in parallel.

  These results depend on a new edge-sampling theorem, developed in the paper, 
  that says that if each edge $e = \{u, v\}$ is sampled independently with probability $c \log^2 n/\min\{\mbox{degree}(u), \break \mbox{degree}(v)\}$ (for a large enough constant $c$) then
  all cuts of size at least $n$ are approximated in the sampled graph. 
  This sampling theorem is inspired by series of papers on graph sparsification via random edge sampling due to Karger~(STOC 1994),
  Bencz\'{u}r and Karger~(STOC 1996, arxiv 2002), and Fung et al.~(STOC 2011).
  The edge sampling techniques in these papers use probabilities that are functions of edge-connectivity or a related measure called edge-strength.
  For the purposes of this paper, these edge-connectivity measures seem too costly to compute and the main technical contribution of this paper is to
  show that degree-based edge-sampling suffices to approximate large cuts.
\end{abstract}

\section{Introduction}
The $\mathcal{CONGEST}$ model is a synchronous, message-passing model of distributed
computation in which the amount of information that a node can transmit along
an incident communication link in one round is restricted to $O(\log n)$ bits, where
$n$ is the size of the network \cite{peleg2000distributed}. As the name suggests, the
$\mathcal{CONGEST}$ model focuses on congestion as an obstacle to distributed computation. 
In this paper, we focus on the design of distributed algorithms in the
$\mathcal{CONGEST}$ model on a clique communication network; we call this the
\textit{Congested Clique} model. In the Congested Clique model, all information is nearby, i.e., 
at most one hop away, and so any difficulty in solving a problem is due to congestion alone
In this paper we focus on the \textit{minimum spanning tree (MST)} problem in the Congested
Clique model and show how to solve it in $O(\log \log \log n)$ rounds with high probability.
Prior to this, the fastest MST algorithm in the Congested Clique was a deterministic algorithm 
due to Lotker et al.~\cite{lotker2005mstJournal} from about a decade ago.

The MST problem has a long history in distributed computing~\cite{gallager1983ghs, awerbuch1987optimal,garay1998sublinear, KhanTheoreticalCS2007}.
After a long sequence of results on MST through the 80's and 90's, Kutten and Peleg 
\cite{KuttenPeleg1998} showed how to compute an MST in the $\mathcal{CONGEST}$ model in $O(D + \sqrt{n} \cdot \log^* n)$ 
rounds on $n$-vertex diameter-$D$ graphs.
The near-optimality of this result was established by lower bounds on MST construction in the 
$\mathcal{CONGEST}$ model due to Peleg and Rubinovich \cite{PelegRubinovich2000}, Elkin \cite{Elkin2006}, and most recently
Das Sarma et al.~\cite{DasSarmaSICOMP2011}.
In the latter paper \cite{DasSarmaSICOMP2011}, a lower bound of $\Omega(\sqrt{n/\log^2 n} + D)$ is shown for 
$D = \Omega(\log n)$.
Lower bounds are known for smaller $D$ as well; for example, for $D = 3$, Das Sarma et al.~\cite{DasSarmaSICOMP2011}
show a lower bound of $\Omega((n/\log n)^{1/4})$.
Note that there are no lower bounds known for $D = 2$ or $D = 1$, which is the setting we are 
interested in.
For diameter-1 graphs, i.e., cliques, the $O(\log \log n)$-round deterministic algorithm of 
Lotker et al.~\cite{lotker2005mstJournal} has been the fastest known for more than a decade.
The lack of lower bounds in the Congested Clique model has kept open the possibility that faster
MST algorithms are possible and we show that this in indeed the case by presenting an exponentially
faster algorithm.

A key step along the way to designing the above-mentioned MST algorithm is a 
\textit{connectivity verification} algorithm in the Congested Clique model that not only 
runs in $O(\log \log \log n)$ rounds with high probability, but also has low message complexity. 
The low message complexity allows the fast computation of an MST by running multiple instances 
of the connectivity verification algorithm in parallel.
These results depend on a new edge-sampling theorem, developed in the paper, 
that says that if each edge $e = \{u, v\}$ is sampled independently with probability $c \log^2 n/\min\{\mbox{degree}(u), \mbox{degree}(v)\}$ (for a large enough constant $c$) then
all cuts of size at least $n$ are approximated in the sampled graph.
This sampling theorem is inspired by series of papers on graph sparsification via random edge 
sampling due to Karger~\cite{karger1994stoc}, 
Bencz\'{u}r and Karger~\cite{benczurKarger2002arxiv, benczurKarger1996stoc}, and Fung et al.~\cite{fung2011stoc}.
The edge sampling techniques in these papers use probabilities that are functions of 
edge-connectivity or a related measure called edge-strength.
For the purposes of this paper, these edge-connectivity measures seem too costly to compute 
and the main technical contribution of this paper is to
show that degree-based edge-sampling suffices to approximate large cuts.

\subsection{Main Results}
\label{section:mainResults}
In this paper, we achieve the following results:
\begin{itemize}
  \item We show how to solve the Connectivity Verification problem in $O(\log \log \log n)$ rounds  on a Congested Clique w.h.p.\footnote{We say an event occurs with high probability (w.h.p.), if the probability of that event is at least $(1-1/n^c)$ for a constant $c\geq 1$.} on an input graph $G$.
It has the following implication.

\item We show how to use this Connectivity Verification algorithm solve the MST problem in $O(\log \log \log n)$ rounds w.h.p. on a Congested Clique. 
\end{itemize}
In order to achieve our results, we use a variety of techniques that balance bandwidth constraints with the need to make rapid progress. 
One of the key technique we use is random edge sampling. 
In the next subsection we describes these sampling techniques at a high level. 
We believe that our techniques will have independent utility in any distributed setting in which congestion is a bottleneck. 

\subsection{Random Sampling in the Congested Clique Model}
\paragraph{Random graph sampling.} 
Randomly sampling vertices or edges to obtain a reduced-sized subgraph of the input graph has been studied in various computational models for a variety of problems. 
For example, in the sequential setting (RAM model) cut, flow, and network design problems can be solved faster on the sampled subgraph than on the input graph and 
more importantly, due to properties of the random sample, the solution on the sampled subgraph can be efficiently translated into a solution of the original graph~\cite{karger1994stoc, benczurKarger1996stoc, fung2011stoc}. 
Having a reduced-sized subgraph also enables solving problems efficiently in the streaming model~\cite{ahn2009streaming}.
Recently, applications of random vertex- and edge-sampling to solve problems in MapReduce model~\cite{KarloffSuriVassilvitskii} have been shown~\cite{lattanzi2011filtering}. 

\paragraph{Random sampling in the Congested Clique model.}
The Congested Clique model has high bandwidth availability over the entire network, but congestion at individual nodes.
Each node can communicate $\Theta(n)$ messages in each round and hence a total of $\Theta(n^2)$ messages are exchanged in a round over the entire network. 
Hence, an $n$-vertex graph can be fully communicated across the network in one round, but only a linear-sized subgraph can reach a single node. 
Given this situation, a general approach would be to use the outcome of local processing of a linear-sized subgraph to compute the solution to the original problem.  
Of course, the key challenge of such an approach is showing that a linear-sized subgraph with the appropriate properties can be quickly sampled. 
The question is how to set the probabilities of sampling such that (i) we get a linear-size subgraph and (ii) processing this subgraph enables efficient computation of the solution to the original problem. 
One can produce this linear-sized subgraph in a variety of ways - 
by randomly sampling vertices independently with probability $1/\sqrt{\alpha}$, where $\alpha$ is the average degree of $G$ or 
by sampling each edge independently with probability  $1/\alpha$, etc. 
But for many cases, the sampling-probabilities based on these ``local'' quantities such as degree, max-degree, average degree, etc are not adequate to obtain a subgraph with the appropriate properties. 
On the other hand, computing sampling probabilities which are based on ``global'' properties of the graph may be as hard as solving the problem on the original graph itself. 
In summary, the challenges of random sampling in the Congested Clique are two-fold - 
(i) how to set the sampling probabilities so that we a get a linear-sized subgraph with the appropriate properties and (ii) how to compute these probabilities quickly on the Congested Clique model.

In this paper, we describe our edge-sampling technique to solve the Connectivity Verification problem. 
Random edge-sampling has been shown to be useful in the context of cut-approximation in the sequential setting (RAM model)~\cite{karger1994stoc, benczurKarger1996stoc, fung2011stoc}. 
For example, Fung et al.~\cite{fung2011stoc} showed that edge-connectivity-based probabilities produce a $O(n\log^2 n)$-size subgraph which approximates all the cuts in the original graph w.h.p..  
Hence, the Connectivity Verification problem can be solved on this reduced-size sampled graph obtained in this way.
The problem with this approach is the probability function depends on the edge-connectivity which is a global-property and hence it might be difficult to compute quickly in the Congested Clique model. 
On the other hand, for the Connectivity Verification problem we don't need such a strong result on the cut-approximation similar to the result of Fung et al.~\cite{fung2011stoc}. 
In this report, we show that degree-based edge-sampling probabilities are sufficient to solve the Connectivity Verification problem. 
Specifically, we show the following result: 
\begin{itemize}
  \item If each edge $e \in E$ is independently sampled with probability based on the degrees of its end-points then the set of sampled edges $\hat{E}$ has the following properties w.h.p.:
    (i) $|\hat{E}| = O(n \log^2 n)$ and (ii) the number of inter-component\footnote{Refer to Subsection~\ref{sub:tech} for the definition of inter-component edges}  edges between components induced by $\hat{E}$ is $O(n)$. 
    (For the precise statement of this, see Theorem~\ref{thm:sample}).
\end{itemize}

\subsection{Preliminaries}
\label{sub:tech}
\paragraph{Maximal spanning forest and component graph.}
A \textit{maximal spanning forest} of a graph $G$ is a spanning forest of $G$ which has exactly as many trees as the number of components in $G$.    
  For a given graph $G(V,E)$ and a given subset of edges $\hat{E} \subseteq E$, let $\mathcal{C}$ be the set of connected components of graph $(V,\hat{E})$. 
  The \textit{component graph} ``induced'' by edges in $\hat{E}$ is the graph $cg[G, \hat{E}]$ whose vertices are components $\mathcal{C}$ and whose edges are 
 \textit{inter-component edges} defined as  
 \[E' = \left\{\{C_i, C_j\} \mid C_i, C_j \in \mathcal{C} \mbox{ and there exists } \{u,v\} \in E \mbox{ such that } u \in C_i, v \in C_j\right\}\]
where $C_i$ is the minimum of ID of nodes in component $C_i$. 
The minimum ID node in a component $C$ is also referred as the \textit{leader} of $C$ and denoted as $\ell(C)$.
For the convenience we \textit{label} a component $C$ by the ID of $\ell(C)$ and 
let $c(u)$ denote the label of the component of a node $u$. 
Refer to Figure~\ref{fig:componentGraph} for an illustration of a component graph. 
We can extend the concept of the component graph to weighted graphs as well by defining weights of inter-component edges as follows:
\[wt(C_i, C_j) = \min\{wt(u, v) \mid \{u, v\} \in E \mbox{ such that } u \in C_i, v \in C_j\}. \]
\begin{figure}
  \begin{boxedminipage}{\textwidth}
    \centering
    \begin{subfigure}[b]{0.3\textwidth}
     \includegraphics[scale=0.6]{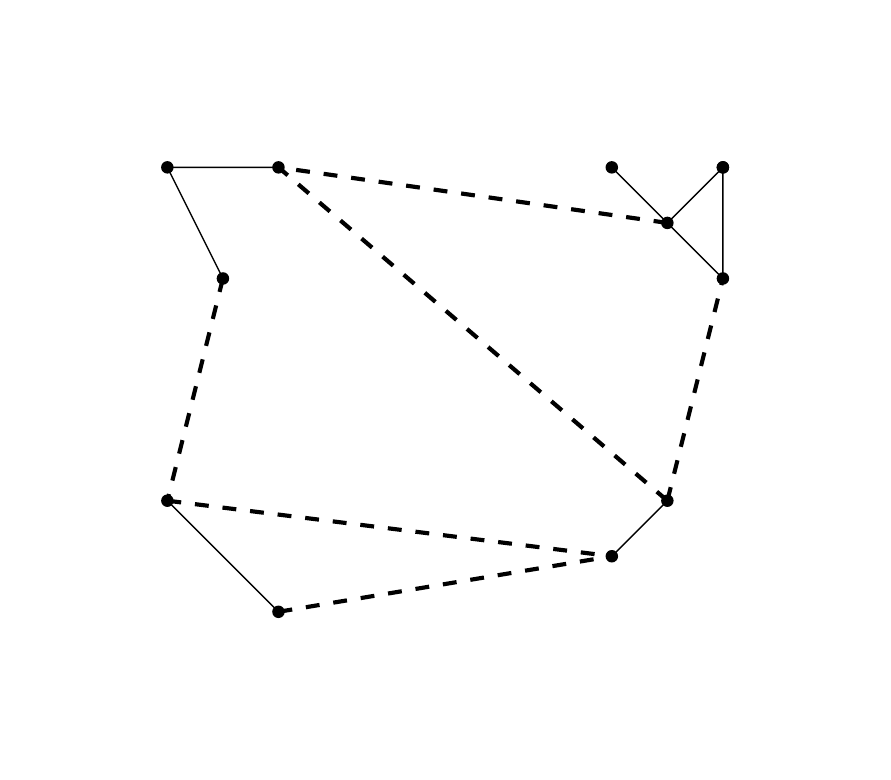} 
      \caption{Edges in $\hat{E}$\label{subfig:edges}}
    \end{subfigure}
    \begin{subfigure}[b]{0.3\textwidth}
     \includegraphics[scale=0.6]{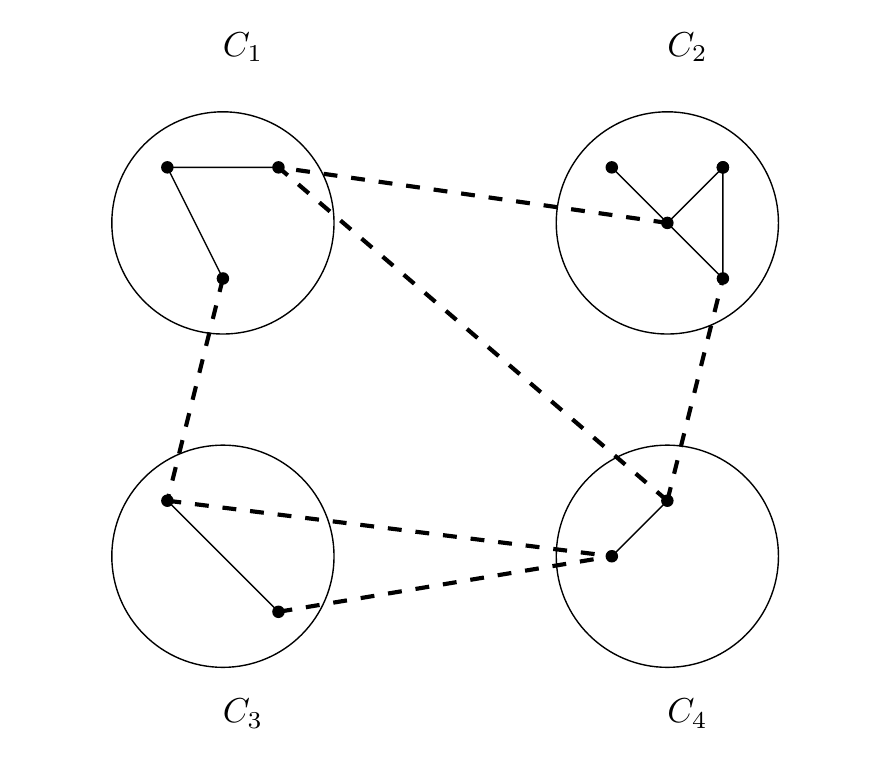} 
     \caption{Components induced by $\hat{E}$\label{subfig:component}}
    \end{subfigure}
    \begin{subfigure}[b]{0.3\textwidth}
     \includegraphics[scale=0.6]{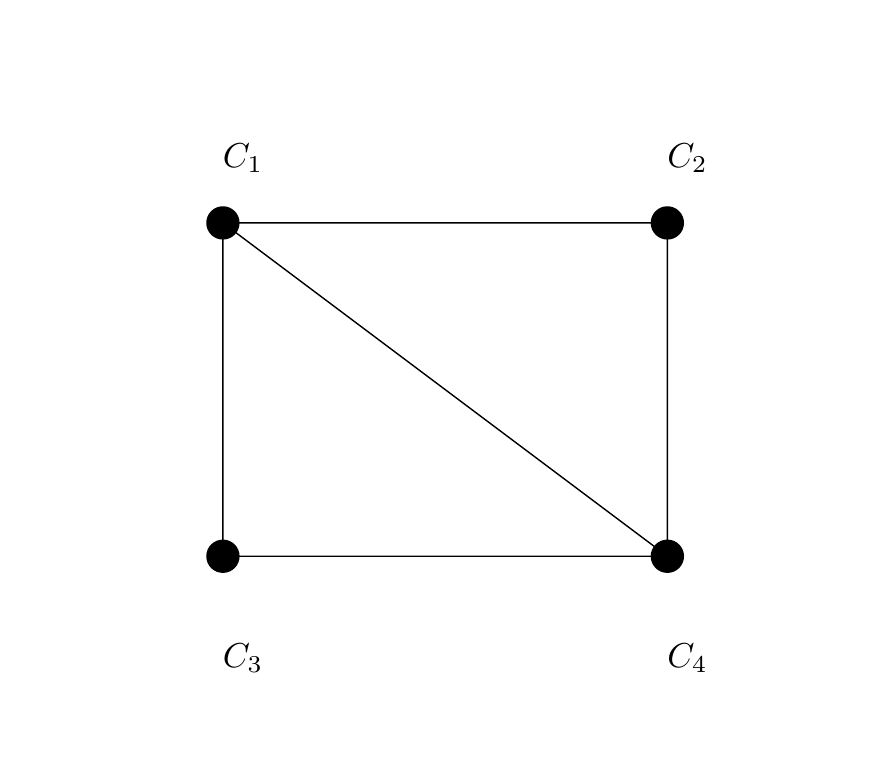}
      \caption{Component graph\label{subfig:cg}} 
    \end{subfigure}
  \end{boxedminipage}
  \caption{(a) The solid lines are the edges in $\hat{E}$ and the dashed lines are the edges in $E\setminus \hat{E}$ (b) The circles are the components induced by edges in $\hat{E}$ (c) The component graph induced by edges in $\hat{E}$ and the solid lines are the inter-component edges. 
  \label{fig:componentGraph}}
\end{figure}
\paragraph{Connectivity Verification problem.}
The input to the Connectivity Verification problem is a graph $G=(V, E)$ and the goal is to decide whether $G$ is connected or not. 
Initially, each node $v \in V$ knows incident edges in $E$. 
When the algorithm ends, all nodes in $V$ are required to know a maximal spanning forest of $G$ and hence can decide whether $G$ is connected or not.

\paragraph{MST problem.}
The input to MST problem is a weighted clique graph $G(V,E)$. 
(This can be generalized to any non-clique graph where weights of non-edges is set to $\infty$.)
Initially, each node $v \in V$ knows weights $w(v, w)$ to all nodes $w \in V$. 
When the algorithm ends, all nodes in $V$ are required to know a spanning tree $T$ of $V$ of minimum weight.
We assume that weights can be represented in $O(\log n)$ bits. 

\paragraph{Lenzen's routing protocol.}
A key algorithmic tool that allows us to design near-constant-time round 
algorithms is a recent deterministic routing protocol by Lenzen 
\cite{lenzen2013routing} that disseminates a large volume of information 
on a Congested Clique in constant rounds.
The specific routing problem, called an \textit{Information Distribution Task},
solved by Lenzen's protocol is the following.
Each node $i \in V$ is given a set of $n' \le n$ messages, each of size $O(\log n)$, $\{m_i^1, m_i^2, \ldots, m_i^{n'}\}$,
with destinations $d(m_i^j) \in V$, $j \in [n']$.
Messages are globally lexicographically ordered by their source $i$, destination $d(m_i^j)$, and $j$.
Each node is also the destination of at most $n$ messages.
Lenzen's routing protocol solves the Information Distribution Task in $O(1)$ rounds.

\section{Large-Cut-Preserving Random Edge-Sampling}
\label{sec:cut}
In this section, we describe how to sparsify the given graph $G$ by sampling edges such that ``large'' cuts of $G$ have approximately the same value in $G$ as in the sampled graph. 
The problem of approximating every cut of $G$ arbitrarily well in the sampled graph first introduced by Karger~\cite{karger1994stoc}. 
He showed that if the graph has minimum cut-size $c$, then sampling edges with probability roughly $1/{\epsilon^2 c}$ yields a graph with cuts that are all, with high probability, within $(1\pm \epsilon)$ of their expected values. 
However if $G$ has $m$ edges, then the sampled graph has $O(m/c)$ edges so this scheme may not sparsify the graph effectively when $c$ is small. 
Bencz\'{u}r and Karger~\cite{benczurKarger2002arxiv, benczurKarger1996stoc} later showed that one can obtain a weighted graph with $O(n \log n)$ edges which approximates all the cuts in the original graph. 
This was achieved by sampling edges with non-uniform probabilities which are based on \textit{edge-strengths} (a measure of edge-connectivity). 
Fung et al.~\cite{fung2011stoc} simplified the computation of these probabilities based on standard edge-connectivity and showed that one can obtain a weighted graph graph with $O(n \log^2 n)$ edges which approximates all the cuts in the original graph.  
The motivation behind the work of Bencz\'{u}r-Karger~\cite{benczurKarger1996stoc, benczurKarger2002arxiv} and Fung et al.~\cite{fung2011stoc} was to speed-up the computation of max-flow approximation, since the sampled graph has fewer edges, minimum cuts can be found in it faster than in the original graph. 
We are interested in ``preserving'' only large cuts, i.e., cuts of size $\Omega(n)$ as opposed to arbitrary cuts and hence a special case of above problems. 
In this section, we show that simple degree-based sampling probabilities are good enough to obtain a sparse graph with $O(n \log^2 n)$ edges which has at least one edge from every $\delta$-size cuts in $G$ when $\delta \geq n$. 
We start with defining few terms which will enable us to state the sampling probabilities.  

\begin{definition}[rounded-degree, $k$-degree edge]
  The rounded-degree of a vertex $u$ is defined as $rd(u) = 2^{\lfloor \log \degree(u) \rfloor}$. 
  The rounded-degree of an edge $e = \{u, v\}$ is defined as $rd(e) = \min\{rd(u), rd(v)\}$. 
  An edge $e$ is called a $k$-degree edge if its rounded-degree $rd(e) = k$. 
\end{definition}

An edge $e$ is independently sampled with probability $p_e = \min \{50\log^2n/rd(e),  1\}$. 
When edges of $G$ are sampled in this manner, we obtain a graph $\hat{G}$, such that w.h.p. (i) $\hat{G}$ has $O(n \log^2 n)$ edges and (ii) every cut of size at least $n$ in $G$ contains at least one edge in $\hat{G}$.
We can obtain approximate cuts in the sampled graph for large cuts in $G$ if we assign weights $1/p_e$ to the sampled edges in the sampled graph, but for our purpose it is sufficient to show that at least one edge from every large cut is sampled. 
More formally, we'll prove the following theorem:
\begin{theorem}
  For a given undirected graph $G$, if we sample each edge $e$ independently with probabilities $p_e = \min\{ 50\log^2n/rd(e) , 1\}$ to obtain the sampled graph $\hat{G}$ then the following properties hold with high probability: 
  (i) Number of edges in $\hat{G} = O(n \log^2 n)$ and (ii) from every cut of size at least $n$ in $G$ at least one edge is sampled. 
\end{theorem}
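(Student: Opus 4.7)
For Part (i), the plan is to bound $\mathbb{E}[|\hat{E}|]$ and then apply a Chernoff concentration. By linearity, $\mathbb{E}[|\hat{E}|] \le 50\log^2 n \cdot \sum_e 1/rd(e)$, so the key subclaim is $\sum_e 1/rd(e) \le 2n$. I would prove this by a charging argument: orient each edge $e=\{u,v\}$ toward the endpoint with smaller rounded degree (breaking ties arbitrarily). For each vertex $v$, the oriented edges received at $v$ satisfy $rd(e)=rd(v)$, and there are at most $\deg(v) \le 2\cdot rd(v)$ of them (since $rd(v) \ge \deg(v)/2$), so $v$ contributes at most $2$ to the sum. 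Summing over the $n$ vertices gives $2n$, hence $\mathbb{E}[|\hat{E}|] \le 100\,n\log^2 n$. Since the indicators $\mathbf{1}[e \in \hat{E}]$ are independent, a standard Chernoff bound then yields $|\hat{E}| = O(n\log^2 n)$ w.h.p.

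For Part (ii), I would first pin down the per-cut miss probability. Fix a cut $C$ with $|C| = k \ge n$; since $\deg(u), \deg(v) \le n-1$ for every $e=\{u,v\}$, we have $rd(e) \le n$ and hence $p_e \ge 50\log^2 n / n$ for every $e \in C$. Using $1-x \le e^{-x}$,
\[
\Pr[C \cap \hat{E} = \emptyset] \;\le\; \exp\!\left(-\sum_{e \in C} p_e\right) \;\le\; \exp\!\left(-\frac{50\, k \log^2 n}{n}\right),
\]
which at the threshold $k=n$ equals $n^{-\Theta(\log n)}$ and decays geometrically as $k$ grows. The plan for the union bound is to partition cuts of size $\ge n$ into geometric size-ranges $[2^i n, 2^{i+1} n)$ for $i = 0, 1, 2, \ldots$ and invoke Karger's cut-counting theorem---the number of cuts of size at most $\alpha c$ is at most $n^{2\alpha}$, where $c$ is the minimum cut of $G$---to bound the number of cuts in each range. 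The per-cut miss probability $\exp(-50\cdot 2^i \log^2 n)$ should then be small enough to absorb the Karger count and leave a geometric tail summing to $o(1/\mathrm{poly}(n))$.

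The main obstacle is that Karger's bound degrades precisely when $G$ has a very small minimum cut (a single bridge makes it vacuous at $k \approx n$), so a naive use of Karger does not close the union bound. To handle this regime I would sharpen the per-cut lower bound on $\sum_{e \in C} p_e$ by using the inequality $\sum_{e \in C} 1/rd(e) \ge \tfrac{1}{2} \sum_{v} d_v(C)/rd(v)$, where $d_v(C)$ is the number of cut edges incident to $v$, and then case-split: (a) if some vertex $v$ has $d_v(C) = \Omega(rd(v)/\log n)$, a per-vertex Chernoff gives miss probability $n^{-\Omega(\log n)}$ at $v$, with only a manageable set of local neighbor-subsets to union-bound over; (b) otherwise the cut is uniformly spread across many vertices with $d_v(C) \ll rd(v)$, and I would apply Karger-style counting on a contracted graph where densely connected clumps are merged into supernodes so that the effective minimum cut becomes $\Omega(n/\log n)$. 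Reconciling these two regimes into a single clean union bound is what I expect to be the principal technical step.
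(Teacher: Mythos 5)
Your argument for part (i) is correct and is essentially the paper's argument (Lemma~\ref{lemma:ssize}): charge each edge to its endpoint of smaller rounded degree, note that at most $\degree(v)\le 2\,rd(v)$ edges are charged to $v$ so the expected number of sampled charged edges per vertex is $O(\log^2 n)$, and conclude by Chernoff. Whether you concentrate per vertex and union-bound (as the paper does) or concentrate the global sum is immaterial.

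Part (ii) is where the genuine gap lies, and you have correctly diagnosed it without repairing it: the per-cut miss probability $n^{-\Theta(\log n)}$ cannot absorb a union bound over up to $2^{n-1}$ cuts, and Karger's cut-counting theorem is vacuous when the minimum cut is small. Neither of your two repair branches closes. In branch (a), the event ``no edge of $C\cap E(v)$ is sampled'' depends on which subset of $E(v)$ the cut induces, and there are $2^{\Omega(rd(v)/\log n)}$ candidate subsets of the relevant size; a miss probability of $n^{-O(1)}$ (which is all you get from $\sum_{e}p_e=\Theta(\log n)$) does not survive that union bound, and there is no a priori ``manageable set of local neighbor-subsets'' without some counting lemma. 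In branch (b), contracting ``densely connected clumps'' is exactly the Bencz\'{u}r--Karger strength decomposition the paper is trying to avoid, and it is unclear that a cut of size $\ge n$ in $G$ survives as a cut of the contracted graph, or that degree-based sampling probabilities align with the contracted graph's connectivity. The missing idea is the paper's \emph{projection-counting lemma}: decompose each cut $C$ into its $k$-projections $S_k$ (the edges of $C$ with rounded degree exactly $k$); by pigeonhole over the $\log n$ values of $k$, some $S_k$ has at least $|C|/(2\log n)$ edges, and every edge of $S_k$ is sampled with probability $\min\{50\log^2 n/k,1\}$, giving a survival failure probability of at most $n^{-8\delta/k}$ for that projection. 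The crucial counting step (Lemma~\ref{lemma:projectionCount}) shows that the number of \emph{distinct} $k$-projections of cuts of size at most $\delta$ is at most $n^{2\delta/k}$ --- independent of the minimum cut --- via a Karger-style random contraction in which all vertices of rounded degree less than $k$ are first split off, so the contracted multigraph always retains at least $kh/2$ edges and the contraction survival probability is at least $\binom{n}{2\delta/k}^{-1}$. The exponent $2\delta/k$ in the count is dominated by the exponent $8\delta/k$ in the failure probability, which is exactly the matching of degree-based counting to degree-based sampling that your sketch lacks; without it the two regimes you describe cannot be reconciled.
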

\noindent We prove this theorem in a similar manner as Fung et al.~\cite{fung2011stoc} proved for edge-connectivity based probabilities. 
We first define $k$-projection of a cut and then count distinct $k$-projections of cuts of sizes $\delta$. 
Then we show that ``bad events'' happen with low probability for a single projection and then use the counting result to apply a union bound. 
\subsection{The Projection Counting Theorem}
\begin{definition}[$k$-projection]
  For a cut $C$ the subset $S \subseteq C$ of $k$-degree edges is called $k$-projection of $C$. 
\end{definition}
It is worth mentioning that this definition of $k$-projection is different from the definition of $k$-projection in Fung et al.~\cite{fung2011stoc}. 
They defined $k$-projection of a cut $C$ to be subset $S\subseteq C$ of edges whose edge-connectivity is at least $k$. 
We are sampling edges based on degrees as opposed to edge-connectivities and hence the definition of $k$-projection is also based on degrees. 

\noindent To count the number of distinct $k$-projections of cuts of size $\delta$, consider the following two operations.
\begin{operation}[splitting-off]
  \textit{`Splitting-off a pair of edges'} refers to replacing the pair of edges $\{s, u\}$ and $\{u, t\}$ in an undirected multigraph by a single edge $\{s, t\}$. 
  The operation \textit{`splitting-off a vertex'} with even degree in an undirected graph refers to splitting-off all arbitrary pairs of incident edges on this vertex.  
\end{operation}
This splitting-off operation was first introduced by Lov\'{a}z in~\cite{lovasz1993book}. 
Various properties of splitting-off operations have been studied in the past~\cite{mader1978reduction, babai1986complexity}. 
Mader~\cite{mader1978reduction} showed that there exists a pair of edges incident on an even-degree vertex such that splitting off this pair does not decrease connectivity of the rest of the edges in the resulting graph. 
We need a very simple property:
\begin{proposition} 
Splitting-off an even-degree vertex preserves vertex degrees of the rest of the vertices in the resulting graph. 
\end{proposition}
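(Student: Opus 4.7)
The plan is to prove this by tracking the degree changes induced by a single splitting-off of a pair, and then iterating. Fix an even-degree vertex $u$ and consider one splitting-off step: two incident edges $\{s,u\}$ and $\{u,t\}$ are deleted and a new edge $\{s,t\}$ is inserted. The only vertices whose incidences are touched by this single operation are $s$, $t$, and $u$; every other vertex's incidence list is unchanged, so its degree is trivially unchanged.

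Next I would check the degrees of $s$ and $t$ explicitly. Vertex $s$ loses exactly one incidence (the edge to $u$) and gains exactly one incidence (the new edge to $t$), so $\deg(s)$ is unchanged; symmetrically for $t$. The subtlety worth addressing is the case $s = t$, i.e.\ when the two edges being split off are $\{s,u\}$ and $\{u,s\}$ (parallel edges). In this case the new edge $\{s,s\}$ is a self-loop, which by the standard multigraph convention contributes $2$ to $\deg(s)$; $s$ simultaneously loses two incidences to $u$, so again $\deg(s)$ is unchanged.

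Now I would lift this to the full operation of splitting-off the vertex $u$. By definition this consists of repeatedly splitting-off arbitrary pairs of edges incident to $u$ until none remain, which is possible precisely because $\deg(u)$ is even. Each individual splitting-off preserves the degree of every vertex in $V\setminus\{u\}$ by the argument above, so by induction on the number of pair-splits (which equals $\deg(u)/2$), every vertex in $V\setminus\{u\}$ retains its original degree in the final graph, as claimed.

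There is no real obstacle here: the statement is essentially bookkeeping about multigraph degrees, and the only point that needs a sentence of care is the self-loop case, where the convention $\deg_{\text{self-loop}} = 2$ is exactly what makes the degree accounting balance.
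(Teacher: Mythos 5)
Your proof is correct; the paper actually states this proposition without any proof at all, treating it as immediate, and your incidence-counting argument (one edge lost, one gained at each of $s$ and $t$, with the self-loop convention handling $s=t$) is exactly the justification the authors have in mind. The extra care you take with the parallel-edge/self-loop case is appropriate here, since the operation is applied to the multigraph $G_M$ where parallel edges genuinely occur.
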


\noindent The following operation was introduced by Karger~\cite{karger1993soda}. 
\begin{operation}[edge-contraction]
  Contraction of an edge $e = \{u, v\}$ in an undirected multigraph is defined as merging $u$ and $v$ into a single vertex.
  Any self-loops produced by edges between $u$ and $v$ are discarded. 
  We call undoing contracted edge $e$ as \textit{expanding} the vertex formed by contracting $e$. 
\end{operation}

\noindent We now present the $k$-projection counting theorem.
The proof of this theorem is similar to the celebrated cut-counting result of Karger~\cite{karger1993soda} 
and the use of of the ``splitting-off'' operation is inspired by Fung et al.~\cite{fung2011stoc}.
We first describe a randomized algorithm which outputs a random cut.
This algorithm establishes a probability distribution over $k$-projections. 
We bound the probability of this random cut being the cut under inception and use this to bound number of distinct $k$-projections.
\begin{lemma}[$k$-projection counting] \label{lemma:projectionCount}
  Let $G$ be an undirected size-$n$ graph. 
  For any $k$ that is power of~$2$ and $\delta \geq 1$ the number of distinct $k$-projections of cuts of size at most $\delta$ is at most $n^{2\delta/k}$.
\end{lemma}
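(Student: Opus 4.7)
The plan is to adapt Karger's random edge-contraction cut-counting argument~\cite{karger1993soda} by combining it with the splitting-off operation to handle vertices whose degrees fall below $k$. Fix any cut $C$ in $G$ of size at most $\delta$ and let $P \subseteq C$ be its $k$-projection. The crucial observation is that every edge of $P$ joins two vertices of degree at least $k$, so no edge of $P$ is incident to a vertex of degree less than $k$; hence $P$ is preserved by every splitting-off operation performed at a low-degree vertex.

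I would run the following random reduction on $G$ until only two super-vertices remain. At each step, if some super-vertex $v$ has current degree strictly less than $k$, apply splitting-off at $v$ (first deleting an arbitrary incident edge if $\deg(v)$ is odd), choosing the pairing to preserve $C$: pair cut-crossing edges with cut-crossing edges and non-crossing edges with non-crossing edges. Otherwise, contract a uniformly random edge. The stated proposition guarantees that splitting-off preserves the degrees of all remaining vertices, so no vertex becomes low-degree as a consequence of its neighbors being processed, and in particular at every contraction step every super-vertex has current degree at least $k$. A short parity check shows that the chosen pairing never increases $|C|$: if the number $a$ of $C$-crossing edges at $v$ is even, $|C|$ decreases by exactly $a$, and if $a$ is odd, by $a - 1$. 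Consequently $C$ stays a cut of size at most $\delta$ throughout, while $P$ is preserved trivially at every splitting-off step.

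At each contraction step, the graph has at least $k n' / 2$ edges when $n'$ super-vertices remain, so the probability of destroying $C$ is at most $\delta/m \leq 2\delta/(k n')$, while the splitting-off steps preserve $C$ deterministically. Telescoping the per-step survival probabilities over the $n-2$ reduction steps yields the standard Karger-style bound
\[
\prod_{n' = \lceil 2\delta/k \rceil + 1}^{n}\!\left(1 - \frac{2\delta}{k n'}\right) \;\geq\; \binom{n}{2\delta/k}^{-1} \;\geq\; n^{-2\delta/k}
\]
on the probability that the process terminates with exactly the cut $C$ separating its two surviving super-vertices. Since the output projection is $P$ whenever the output cut is $C$, and distinct $P$'s correspond to disjoint events after fixing a canonical $C$ for each $P$, the number of distinct $k$-projections of size-$\leq \delta$ cuts is at most $n^{2\delta/k}$.

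I expect the main subtlety to lie in the splitting-off bookkeeping --- verifying the parity case analysis in detail, and, more delicately, confirming that the $C$-dependent pairings give a coherent counting argument across different cuts so that the ``distinct outputs are disjoint events'' step is rigorous. The rest is essentially a direct translation of Karger's telescoping calculation into the degree-regulated setting, relying on the splitting-off proposition to maintain the invariant $m \geq k n'/2$ at every contraction step.
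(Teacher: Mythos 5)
Your overall strategy is the same as the paper's (Karger-style random contraction regulated by splitting-off, followed by a telescoping survival bound), but two points in your execution are genuine gaps rather than bookkeeping.

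First, you say the reduction runs ``until only two super-vertices remain,'' yet your telescoping product only ranges over $n' \geq \lceil 2\delta/k\rceil + 1$. These are incompatible. Once $n' \leq 2\delta/k$, the per-step survival bound $1 - 2\delta/(kn')$ is non-positive, so continuing to contract down to two vertices destroys the lower bound; you cannot simply omit those factors, since the cut really can be killed in those steps. The process must \emph{stop} once about $2\delta/k$ super-vertices remain and then output a uniformly random cut of the residual graph, paying an extra factor of $2^{1-2\delta/k}$ for guessing which of the at most $2^{2\delta/k-1}$ residual cuts is $C$. This is exactly what the paper does, and the combined bound $\binom{n}{2\delta/k}^{-1}2^{1-2\delta/k} \geq n^{-2\delta/k}$ still goes through. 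As written, your argument either proves nothing (if you really contract to two vertices) or is missing the final random-cut step (if you stop early).

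Second, the subtlety you flag at the end is a real defect, not just a detail to check: your splitting-off pairs edges in a way that depends on the fixed cut $C$ (``pair cut-crossing with cut-crossing''), so you do not have a single randomized algorithm whose output distribution can be used to count projections. The counting step requires that \emph{one} fixed algorithm outputs every distinct $k$-projection with probability at least $n^{-2\delta/k}$, whence there are at most $n^{2\delta/k}$ of them; with $C$-dependent pairings the events live in different probability spaces and their probabilities need not sum to at most one. Fortunately the fix is that \emph{arbitrary} pairing already never increases $|C|$ (a crossing--noncrossing pair yields one crossing edge, a crossing--crossing pair removes two, a noncrossing--noncrossing pair adds none), so the cut-aware pairing is unnecessary; the paper uses arbitrary pairings for exactly this reason. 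Relatedly, your ad hoc deletion of an edge at odd-degree vertices perturbs neighbors' degrees and the edge count $m \geq kn'/2$; the paper sidesteps parity entirely by running the argument on the doubled multigraph $G_M$, where all degrees are even, and transferring the count back via the bijection between $k$-projections in $G$ and $2k$-projections in $G_M$.
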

\begin{proof}
  We construct a multigraph $G_M$ by adding two copies of $G$, i.e., for each edge $e$ in $G$ we have two copies of $e$ in $G_M$. 
  There is a bijection between size-$\delta$ cuts in $G$ and size-$2\delta$ cuts in $G_M$. 
  Further, there is a bijection between $k$-projections of size-$\delta$ cuts in $G$ and $2k$-projections of size-$2\delta$ cuts in $G_M$.
  Hence, it suffices to prove above lemma for $2k$-projections of size-$2\delta$ cuts in $G_M$.
  To avoid carrying the ``2'' through the rest of the proof we prove the lemma for $k$-projections of size-$\delta$ cuts in $G_M$ where $k = 2^\ell, \ell\geq 1$ and $\delta$ is even.

  \noindent We run the following randomized algorithm on $G_M$.
We want this algorithm to output a $k$-projection and thereby establish a probability distribution on $k$-projections.
  \begin{enumerate}
    \item Split-off all vertices $u$ whose rounded degree $rd(u) < k$.
    \item Contract an edge chosen uniformly at random in the resulting graph.
    \item If the contraction operation produces a vertex $u$ with $rd(u) < k$, split it off. 
    \item If at most $2\delta/k$ vertices are left, generate a random cut and output $k$-degree edges from the cut (where the degrees are with respect to $G_M$); otherwise, go to Step 2.  
  \end{enumerate}
  Fix a cut $C$ of size at most $\delta$. 
  Let $S$ be its $k$-projection. 
  Since rounded degree of all the edges in $S$ is $k$, none of the edges in $S$ are split-off by Step~1. 
  We argue below that the probability that none of the edges in $S$ are contracted in the above algorithm is at least $1/n^{2\delta/k}$ and therefore, there can be at most $n^{2\delta/k}$ such different $k$-projections. 

  Observe that splitting-off vertices with rounded-degree $<k$ does not affect the degrees of the rest of the vertices. 
  Therefore, in the splitting-off process no edge from $S$ is split-off. 
  If no edge in $C$ is contracted then no edge from $S$ is contracted.
  We now bound the probability that an edge from $C$ is contracted. 
  Let $G_i$ be the multigraph and $h_i$ be the remaining vertices at the beginning of iteration $i$ of the contraction algorithm. 
  Note that $h_1$ is the number of vertices in $G_M$ with $rd(v) \geq k$. 
  At the start of iteration $i$, there are at least $kh_i/2$ edges in the current multigraph $G_i$.
  The size of cut $C$ does not increase during the splitting-off process. 
  Hence the probability that no edge in $C$ is selected to contract in Step 2 of iteration $i$ is at least $ 1 - \frac{\delta}{h_ik/2}$. 
  Therefore, the probability that no edge in $C$ is selected in the entire execution of the algorithm is at least
  \[
    \prod_i \left(1 - \frac{\delta}{h_ik/2}\right) \geq \prod_{j=n}^{2\delta/k+1}\left(1-\frac{\delta}{jk/2}\right) = {n \choose 2\delta/k}^{-1}.
  \]
  Once the number of vertices reaches $2\delta/k$, the algorithm generates a random cut. 
  Since there are at most $2^{2\delta/k - 1}$ distinct cuts in a graph with $2\delta/k$ vertices, the probability that the random cut generated by the algorithm contains only edges in $C$ is at least 
  ${n \choose 2\delta/k}^{-1}2^{1 - 2\delta/k} \geq n^{-2\delta/k}$. 
  Let the random cut generated by the algorithm is $(V_1, V_2)$. 
  We now backtrack the execution of the algorithm to ``expand'' vertices, i.e. we obtain the multigraph $G_{i}$ from the multigraph $G_{i+1}$ by expanding the vertex that was formed by the edge chosen for contraction in iteration $i$. 
  After expansion let $V_1$ becomes $U_1$ in $G_0$ (i.e. multigraph at the beginning of the contraction algorithm) and $V_2$ becomes $U_2$ in $G_0$. 
  We report the $k$-degree edges from the cut $(U_1, U_2)$ and the degrees are with respect to $G_0$. 
  (Note that the degree of a vertex in $G_0$ is the same as its degree in $G_M$ since splitting-off is degree-preserving operation). 
  Observe that if none of the edges in $C$ are contracted in the process then it means none of the edges in $S$ are contracted (some of the edges in $C$ might have split-off but edges in $S$ does not split-off). 
  Therefore, $S$ is exactly the set of $k$-degree edges in $G_M$ output by the algorithm. 
  This is true for every distinct $k$-projection of cuts having at most $\delta$ edges. 
  Hence the lemma follows. 
\end{proof}

\subsection{Sampling Large Cuts}
In this subsection, we show that our choice of degree-based sampling probabilities ``preserves''  $k$-projections given that these are large enough. 
The upper bound on the number of $k$-projections proved in Lemma~\ref{lemma:projectionCount} allows us to apply a union bound.
After showing that large $k$-projections survive in the sampled graph we show at least one edge from large cuts is sampled with high probability.
\begin{definition}[$\delta$-good $k$-projection]
  The $k$-projection $S$ of a cut $C$ of size $\delta$ is a $\delta$-good $k$-projection if $|S| \geq \delta/2\log n$.
\end{definition}

\begin{lemma} \label{lemma:goodProjection}
  Let $G=(V,E)$ be an $n$-vertex graph and let $\hat{E}$ be the set of edges obtained by independently sampling each edge $e\in E$ with probability $p_e = \min\{50\log^2 n/rd(e), 1\}$. 
  For any $\delta \geq n$ and any $k >1$ that is power of~$2$, with probability at least $1 - 1/n^3$ every $\delta$-good $k$-projection contains an edge in $\hat{E}$. 
\end{lemma}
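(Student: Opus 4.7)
The plan is a two-step argument: (i) for a single $\delta$-good $k$-projection, bound the probability that the random sample misses all of its edges, and (ii) apply a union bound using the count of distinct $k$-projections supplied by Lemma~\ref{lemma:projectionCount}.

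For step (i), fix a $\delta$-good $k$-projection $S$. By definition every edge $e \in S$ satisfies $rd(e) = k$, so it is sampled independently into $\hat{E}$ with probability $p = \min\{50\log^2 n/k,\, 1\}$. In the trivial regime $k \le 50\log^2 n$ the probability is $1$ and all of $S$ lies in $\hat{E}$; assume therefore $p = 50\log^2 n/k$. Combining $(1-p)^{|S|} \le e^{-p|S|}$ with the good-projection hypothesis $|S| \ge \delta/(2\log n)$ yields
\[\Pr[\,S\cap\hat{E}=\emptyset\,] \le \exp\!\left(-\frac{50\log^2 n}{k}\cdot \frac{\delta}{2\log n}\right) = \exp(-25\delta\log n/k) = n^{-25\delta/k},\]
where I take $\log = \ln$; any constant-factor loss from a different choice of base is absorbed into the constant $50$.

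For step (ii), Lemma~\ref{lemma:projectionCount} supplies an upper bound of $n^{2\delta/k}$ on the number of distinct $k$-projections of cuts of size at most $\delta$. A union bound over these projections yields a total failure probability of at most $n^{2\delta/k} \cdot n^{-25\delta/k} = n^{-23\delta/k}$. Because $\delta \ge n$ and the rounded degree satisfies $k \le n$, the exponent $\delta/k$ is at least $1$, so this is at most $n^{-23}$, comfortably stronger than the claimed $1 - 1/n^3$.

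I do not expect a real obstacle. The constant $50$ in the sampling probability is precisely tuned so that the per-projection bound $n^{-25\delta/k}$ dominates the projection count $n^{2\delta/k}$ uniformly for $\delta \ge n$, leaving slack. The only mild care needed is for the saturated regime $p = 1$, which is handled trivially, and in observing that Lemma~\ref{lemma:projectionCount} counts \emph{distinct} $k$-projections (not cuts), so different cuts sharing the same $k$-projection do not inflate the union bound.
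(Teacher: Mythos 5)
Your proposal is correct, and it follows the same overall architecture as the paper's proof: bound the failure probability for a single $\delta$-good $k$-projection, then union bound using the $n^{2\delta/k}$ count from Lemma~\ref{lemma:projectionCount}. The one place you diverge is the per-projection step: the paper sets $X$ to be the number of sampled edges of $S$, notes $\E[X] \ge 25\delta\log n/k$, and applies a Chernoff bound to get $\Pr(X \le \E[X]/5) \le n^{-8\delta/k}$, whereas you directly compute the miss probability $(1-p)^{|S|} \le e^{-p|S|} \le n^{-25\delta/k}$. Your version is more elementary (no concentration inequality needed, only $1-p\le e^{-p}$ and independence) and yields a sharper exponent; the paper's Chernoff formulation is the natural one if one wants the stronger conclusion that $\Omega(\delta\log n/k)$ edges of the projection survive (i.e., actual cut approximation, as in the Bencz\'ur--Karger and Fung et al.\ lineage), but for the lemma as stated --- at least one surviving edge --- your calculation suffices. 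One small point of care: the paper reads the lemma as a simultaneous guarantee over all $\delta \ge n$ and all $k$, and so finishes with an additional union bound over the at most $n^2$ values of $\delta$ and $\log n$ values of $k$, degrading $n^{-6}$ to $n^{-3}$. You stop at a fixed $(\delta,k)$ pair, but your bound of $n^{-23\delta/k} \le n^{-23}$ leaves ample slack to absorb that extra factor of $n^2\log n$, so nothing breaks; you should just state that final union bound explicitly if the lemma is to be invoked, as in Theorem~\ref{thm:cut}, uniformly over all cut sizes and all $k$ at once.
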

\begin{proof}
  Fix a size $\delta \geq n$ and $k$. 
  Consider a $\delta$-good $k$-projection $S$, i.e., $S$ is a $k$-projection of size at least $\delta/2\log n$ of size-$\delta$ cut.
  Since all edges in $S$ are $k$-degree edges, the probability that an edge is sampled from $S$ is $\min\{50\log^2 n/k, 1\}$.
  If $k \leq 50\log^2 n$ then we are done, therefore assume that $k > 50\log^2 n$. 
  Let $X$ be the number of edges sampled from $S$.
  Then $\E[X] \geq 25\delta\log n/k$. 
  Since edges are sampled independently, by Chernoff's bound we have 
  \[ 
    \Pr\left(X \leq \frac{\E[X]}{5}\right)  \leq \frac{1}{n^{8\delta/k}}. 
  \]
  By Lemma~\ref{lemma:projectionCount}, for a fixed $\delta$ and $k$ there are at most $n^{2\delta/k}$ distinct $k$-projections. 
  Applying a union bound over all $\delta$-good $k$-projections the probability that 
  the there exists a $\delta$-good $k$-projection from which fewer than $5\delta\log n/k$ edges are sampled is at most 
     \[ \frac{n^{2\delta/k}}{n^{8\delta/k}} = \frac{1}{n^{6\delta/k}} \leq \frac{1}{n^6}. \]
  The last inequality is due to the fact that $k < n$ and the assumption that $\delta \geq n$. 
  Again applying a union bound over at most $n^2$ different values of $\delta$ and at most $\log n$ different values of $k$, for any $k$ and any $\delta \geq n$,
  the probability that there exists a $\delta$-good $k$-projection with fewer than $5\delta\log n/k$ edges sampled is at most 
  $ {n^3}/{n^6} = {1}/{n^3}.$
  Hence the lemma follows. 
\end{proof}

\begin{theorem}\label{thm:cut} 
  Let $G=(V,E)$ be an $n$-vertex graph and let $\hat{E}$ be the set of edges obtained by independently sampling each edge $e\in E$ with probability $p_e = \min\{50\log^2 n/rd(e), 1\}$. 
  Then with probability at least $1 - 1/n^3$ every cut of size at least $n$ contains an edge in $\hat{E}$. 
\end{theorem}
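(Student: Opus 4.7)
The plan is to reduce Theorem \ref{thm:cut} directly to Lemma \ref{lemma:goodProjection} via a simple pigeonhole argument over rounded-degree classes. The observation I would start from is that since every vertex has degree between $1$ and $n-1$, the rounded-degree $rd(e)$ of any edge is a power of two in $\{1, 2, 4, \ldots, 2^{\lfloor \log(n-1) \rfloor}\}$, giving at most $\lceil \log n \rceil$ distinct possible values. For any cut $C$, the $k$-projections of $C$ across these $O(\log n)$ values of $k$ therefore partition $C$.

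Fix an arbitrary cut $C$ of size $\delta \geq n$. By pigeonhole, some $k$-projection $S_k$ of $C$ has $|S_k| \geq \delta/\log n \geq \delta/(2\log n)$, so $S_k$ is a $\delta$-good $k$-projection in the sense of the definition preceding Lemma \ref{lemma:goodProjection}. If the responsible $k$ equals $1$, then $p_e = 1$ for every edge $e$ with $rd(e) = 1$, so $S_k \subseteq \hat{E}$ deterministically and $C \cap \hat{E} \neq \emptyset$. Otherwise $k > 1$ is a power of two, and Lemma \ref{lemma:goodProjection} applies to $S_k$.

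To finish, I would invoke Lemma \ref{lemma:goodProjection} once to obtain a single high-probability event $\mathcal{E}$, holding with probability at least $1 - 1/n^3$, on which every $\delta$-good $k$-projection contains a sampled edge, with the quantification taken simultaneously over all $\delta \geq n$ and all powers of two $k > 1$ (the proof of the lemma already union-bounds over these). Under $\mathcal{E}$, the pigeonhole-selected $S_k$ contributes an edge to $\hat{E}$, and since $S_k \subseteq C$ this edge lies in $C$. The same event $\mathcal{E}$ serves every cut of size at least $n$ at once, yielding the claim at the required probability.

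There is essentially no real obstacle here: all of the heavy lifting — the projection-counting bound of Lemma \ref{lemma:projectionCount}, the Chernoff concentration for a fixed projection, and the union bound over $\delta$ and $k$ — is already packaged inside Lemma \ref{lemma:goodProjection}. The only subtlety worth flagging in the write-up is the boundary case $k = 1$, which the sampling rule $p_e = \min\{50\log^2 n/rd(e), 1\}$ handles automatically by forcing $p_e = 1$, so that the pigeonhole-chosen projection is always useful — either through the lemma or, in this edge case, by direct inclusion in $\hat{E}$.
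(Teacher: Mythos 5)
Your proposal is correct and follows essentially the same route as the paper's own proof: partition the cut into its $k$-projections over the $O(\log n)$ rounded-degree classes, use pigeonhole to find a $\delta$-good $k$-projection, and invoke Lemma~\ref{lemma:goodProjection}. Your explicit handling of the $k=1$ boundary case (where $p_e=1$ forces the projection into $\hat{E}$ deterministically) is a small point of care that the paper leaves implicit, since Lemma~\ref{lemma:goodProjection} is only stated for $k>1$.
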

\begin{proof}
  Let $\mathscr{C}$ be the set of all cuts of size $\delta \geq n$. 
  If for some $k$ we can show there exists a $\delta$-good $k$-projection for every $C \in \mathscr{C}$, then by Lemma~\ref{lemma:goodProjection} we are done. 

  For every cut $C \in \mathscr{C}$, partition $C$ into $S_1, S_2, S_4, \ldots S_{2^{\lfloor \log n \rfloor}}$, where $S_k$ is the $k$-projection of $C$.
  Since $\delta \geq n$, by the pigeonhole principle, at least for one $k$, $S_k$ has more than $\delta/2\log n$ edges.
  Hence, for every cut $C \in \mathscr{C}$, there exists a $\delta$-good $k$-projection.  
  By Lemma~\ref{lemma:goodProjection}, for any $\delta \geq n$ and for any $k$ that is power of $2$ every $\delta$-good $k$-projection contains at least one sampled edge with probability at least $1-\frac{1}{n^3}$.
  Therefore, every $C \in \mathscr{C}$ contains at least one sampled edge w.h.p..
\end{proof}

\subsection{Sampling and Component Graph}
Let $\hat{E}$ be the set of sampled edges. 
We now prove the two properties of sampled edges $\hat{E}$ and the component graph $cg[G, \hat{E}]$ induced by $\hat{E}$. 

We say an edge $e = \{u, v\}$ is \textit{charged} to vertex $u$ if $rd(u) < rd(v)$.  
  If $rd(u) = rd(v)$ then the edge $\{u, v\}$ is charged to either $u$ or $v$ arbitrarily. 
  Hence every edge is charged to exactly one vertex.
\begin{lemma}\label{lemma:ssize}
  Let $G=(V,E)$ be an $n$-vertex graph  and let $\hat{E}$ be the set of edges obtained by independently sampling each edge $e\in E$ with probability $p_e = \min\{50\log^2 n/rd(e), 1\}$. 
  Then, with probability at least $1 -\frac{1}{n}$, 
  the number of sampled edges charged to a vertex is at most $150\log^2 n$ and 
  the total number of sampled edges is $O(n\log^2 n)$.
\end{lemma}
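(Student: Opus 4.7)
The plan is to bound, for each vertex $u$ separately, the number of sampled edges charged to $u$, and then apply a union bound over $V$. The total-count claim then follows because the charging partitions the edges of $G$.

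First I would observe that the charging rule makes every edge $e=\{u,v\}$ charged to $u$ satisfy $rd(e)=rd(u)$: by definition of the charge, $rd(u)\le rd(v)$, so $rd(e)=\min\{rd(u),rd(v)\}=rd(u)$. Consequently, every edge charged to $u$ is sampled with the same probability $p_u=\min\{50\log^2 n/rd(u),\,1\}$, and the number of edges that can possibly be charged to $u$ is at most $\degree(u)<2\,rd(u)$ by the definition of rounded-degree.

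Next I would split into two cases based on $rd(u)$. If $rd(u)\le 50\log^2 n$, then $p_u=1$, and the number of sampled edges charged to $u$ is at most $\degree(u)<2\,rd(u)\le 100\log^2 n$, which already satisfies the claimed bound deterministically. If $rd(u)>50\log^2 n$, the expected number of edges charged to $u$ that get sampled is at most $2\,rd(u)\cdot(50\log^2 n/rd(u))=100\log^2 n$, and since the samples are independent a standard Chernoff bound gives that the number exceeds $150\log^2 n$ (i.e.\ $1.5$ times the mean) only with probability at most $1/n^2$ (for the constant $50$ chosen).

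Finally I would take a union bound over the $n$ vertices: with probability at least $1-1/n$, every vertex has at most $150\log^2 n$ sampled edges charged to it. Because every edge is charged to exactly one endpoint, summing over $u\in V$ yields $|\hat{E}|\le 150\,n\log^2 n=O(n\log^2 n)$ on the same event. The only even mildly delicate step is calibrating the constants so that the Chernoff tail comes out as $1/n^2$ while keeping the bound at $150\log^2 n$; this is a routine computation given the gap between the expected value $100\log^2 n$ and the threshold $150\log^2 n$.
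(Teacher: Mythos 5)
Your proposal is correct and follows essentially the same route as the paper: bound the sampled edges charged to each vertex via a Chernoff bound using $\E[X]\le \degree(u)\cdot 50\log^2 n/rd(u)\le 100\log^2 n$, union bound over vertices, and sum using the fact that every edge is charged to exactly one endpoint. Your handling of the $rd(u)\le 50\log^2 n$ case is in fact slightly more explicit than the paper's, and the minor difference in the per-vertex tail bound ($1/n^2$ versus the paper's $1/n^3$) is immaterial to the stated conclusion.
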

\begin{proof}
  Consider a vertex $u$.
  Let $E_u$ be the set of edges charged to $u$.
  Edges in $E_u$ are sampled with probability $\min\{50\log^2 n/rd(u), 1\}$. 
  If $rd(u) \leq 50\log^2 n$ then we are done, therefore assume $rd(u) > 50\log^2 n$. 
  Let $X_e$ be the indicator random variable indicating if edge $e \in E_u$ is sampled. 
  Let $X = \sum_{e\in E_u} X_e$ denote the total number of sampled edges charged to $u$. 
  Then $\E[X] = \sum_{e \in E_u} 50\log^2 n/rd(u) \leq \degree(u)\cdot50\log^2 n/rd(u) \leq 100\log^2 n$. 
  Since edges are sampled independently, by Chernoff's bound we have, 
  \[\Pr\left(X > 150\log^2 n\right) \leq \exp\left(-\frac{100\log^2 n}{12}\right) < \frac{1}{n^3}.\] 
  By applying a union bound over all vertices, for every vertex $v$ the probability that the number of sampled edges charged to $v$ exceeds $150\log^2 n$ is at most $\frac{1}{n^2}$. 
  Therefore, the total number of sampled edges is $O(n \log^2 n)$ w.h.p..
\end{proof}

\begin{lemma}\label{lemma:smallcut}
  Let $G=(V,E)$ be an $n$-vertex graph  and let $\hat{E}$ be the set of edges obtained by independently sampling each edge $e\in E$ with probability $p_e = \min\{50\log^2 n/rd(e), 1\}$. 
  Let $G'=(\mathcal{C}, E')$ be the component graph $cg[G, \hat{E}]$. 
  Then with probability at least $1 - \frac{1}{n}$,  $|E'| = O(n)$.
\end{lemma}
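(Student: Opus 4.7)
The plan is to combine Theorem~\ref{thm:cut} with a standard expected-random-cut argument. First I would observe that each connected component $C \in \mathcal{C}$ is, by definition, internally connected via sampled edges, so no sampled edge crosses any set of the form $S = \bigcup_{C \in \mathcal{C}'} C$ for $\mathcal{C}' \subseteq \mathcal{C}$. Theorem~\ref{thm:cut} guarantees, with probability at least $1 - 1/n^3$, that every cut of size at least $n$ in $G$ contains at least one edge of $\hat{E}$; hence with the same probability, every union-of-components cut $(S, V\setminus S)$ must have fewer than $n$ edges in $G$ (otherwise some sampled edge would cross it, contradicting the first observation).

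Next I would relate cuts of $G'$ to cuts of $G$. By the definition of $E'$, each edge $\{C_i, C_j\} \in E'$ that crosses a $G'$-cut between $\mathcal{C}'$ and $\mathcal{C}\setminus \mathcal{C}'$ is witnessed by at least one original edge $\{u,v\} \in E$ with $u \in C_i$, $v \in C_j$, and that witness edge crosses the corresponding $G$-cut at $S = \bigcup_{C\in \mathcal{C}'} C$. Different $E'$-edges yield different witnesses, so the $G'$-cut size is at most the $G$-cut size. Combined with the previous paragraph, this gives a uniform bound: every cut in $G'$ has size strictly less than $n$ w.h.p.

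Finally, to extract an $O(n)$ bound on $|E'|$ from the maximum-cut bound, I would invoke the standard random-cut argument on $G'$: assign each vertex of $\mathcal{C}$ independently to one side of a partition with probability $1/2$, so that each edge of $E'$ is cut with probability exactly $1/2$. The expected number of cut edges is $|E'|/2$, and some realization must attain at least this expectation; yet every realization produces a cut of size less than $n$. Hence $|E'|/2 < n$, giving $|E'| < 2n = O(n)$, with overall failure probability at most $1/n^3 \ll 1/n$.

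The main obstacle is really only step one, namely verifying that \emph{every} union of components has $G$-cut size less than $n$ w.h.p. This is where Theorem~\ref{thm:cut} is applied as a black box, and the statement is immediate once one notes that sampled edges cannot cross a union of components; the remaining steps are combinatorial bookkeeping and a one-line probabilistic argument.
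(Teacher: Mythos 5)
Your proposal is correct and follows essentially the same route as the paper: first bound the maximum cut of $G'$ by $n$ via Theorem~\ref{thm:cut} (using the fact that a sampled edge cannot cross a union of components, so any $G'$-cut corresponds to a $G$-cut of the same or larger size that contains no sampled edge), and then apply the random-bipartition expectation argument to conclude $|E'| < 2n$. The only difference is presentational (you argue the contrapositive where the paper argues by contradiction), so nothing further is needed.
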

\begin{proof} 
  We first prove that the max-cut size of $G'$ is less than $n$. 
  Assume for the sake of contradiction that a cut $(M, \mathcal{C}\setminus M)$ in $G'$ where $M=\{C_{1},\ldots C_{m}\}  \subseteq \mathcal{C}$ has size more than $n$. 
  This means that the cut $(M'=\{v \mid v \in \cup_{i=1}^{m}C_{i}\}, V\setminus M')$ has size more than $n$ in $G$, since each edge in $E'$ is induced by of one or more edges from $E$.
  By Theorem~\ref{thm:cut}, w.h.p., $\hat{E}$ contains at least one edge from the cut $(M', V\setminus M')$ of $G$.
  Let $\{u, v\} \in \hat{E}$ such that $u \in M'$ and $v\in V\setminus M'$. 
  Hence $u$ has to be in one of $C_{i}, i \in [m]$ and let it be $C_{i'}$. 
  This implies in the component graph, $u$ and $v$ to be in the same component $C_{i'}$. Hence a contradiction. 
  This is true w.h.p. for any cut of size more than $n$ in $G'$. 
  Therefore the max-cut size of $G'$ is less than $n$ w.h.p.. 

  \noindent Consider the following randomized algorithm to find a cut in $G'$. 
  Each vertex in $G'$ is independently added to a set $U$ with probability $1/2$. 
  We now analyze the size of the cut $(U, \mathcal{C}\setminus U)$.  
  The probability that an edge crosses this cut is $1/2$. 
  Since there are $|E'|$ edges, the expected size of this cut is $|E'|/2$. 
  But we know that the max-cut size is at most $n$. Therefore, $|E'| < 2n$. 
\end{proof}

\noindent We summarize this section with the following theorem.  
\begin{theorem}\label{thm:sample}
  Let $G=(V,E)$ be an $n$-vertex graph  and let $\hat{E}$ be the set of edges obtained by independently sampling each edge $e\in E$ with probability $p_e = \min\{50\log^2 n/rd(e), 1\}$. 
  Let $cg[G, \hat{E}]$ be the component graph induced by $\hat{E}$. 
  Then with probability at least $1 - \frac{1}{n}$ we have, 
  \begin{enumerate}
    \item The number of sampled edges is $|\hat{E}| = O(n \log^2 n)$, 
    \item The number of inter-component edges, that is, the number of edges in $cg[G, \hat{E}]$ is $O(n)$.
  \end{enumerate}
\end{theorem}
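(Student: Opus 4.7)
The theorem is essentially a packaging of the two lemmas just proved, so my plan is to combine Lemma~\ref{lemma:ssize} and Lemma~\ref{lemma:smallcut} via a union bound and do essentially no new work. Both lemmas already hold with probability at least $1 - 1/n$ individually, so I would first restate the events: let $\mathcal{A}$ be the event ``$|\hat{E}| = O(n\log^2 n)$'' and $\mathcal{B}$ be the event ``$|E'| = O(n)$,'' where $E'$ is the inter-component edge set of $cg[G,\hat{E}]$. Lemma~\ref{lemma:ssize} gives $\Pr[\mathcal{A}] \ge 1 - 1/n$ and Lemma~\ref{lemma:smallcut} gives $\Pr[\mathcal{B}] \ge 1 - 1/n$.

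The only subtlety is that the stated success probability in the theorem is again $1 - 1/n$, not $1 - 2/n$, so I cannot naively apply the union bound while keeping the constants listed. The easy fix is to sharpen the failure bounds in the underlying lemmas: both proofs obtain tail bounds of the form $1/n^c$ for a constant $c$ that can be made arbitrarily large by increasing the sampling constant $50$ or the concentration threshold $150$. In particular, re-running the Chernoff estimate in Lemma~\ref{lemma:ssize} and the application of Theorem~\ref{thm:cut} inside Lemma~\ref{lemma:smallcut} with slightly larger constants yields $\Pr[\neg \mathcal{A}] \le 1/(2n)$ and $\Pr[\neg \mathcal{B}] \le 1/(2n)$, so that $\Pr[\mathcal{A} \cap \mathcal{B}] \ge 1 - 1/n$. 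Alternatively, one can simply declare the constant in the ``$1 - 1/n$'' bound to really be ``$1 - O(1/n)$,'' which is how the paper uses the phrase w.h.p.\ elsewhere.

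With that bookkeeping in place, the proof is three lines: invoke Lemma~\ref{lemma:ssize} to get claim (1); invoke Lemma~\ref{lemma:smallcut} to get claim (2); take the union bound. I do not foresee a real obstacle here; the only place to be careful is to make sure the two high-probability events are coupled correctly, namely that both claims refer to the \emph{same} random sample $\hat{E}$, which they do since each lemma is a statement about $\hat{E}$ drawn from the same degree-based distribution. Since Lemma~\ref{lemma:smallcut} already implicitly relies on Theorem~\ref{thm:cut} (which holds w.h.p.\ for this same $\hat{E}$), conditioning on the joint good event of Theorem~\ref{thm:cut} and Lemma~\ref{lemma:ssize} is enough to simultaneously conclude both items of Theorem~\ref{thm:sample}.
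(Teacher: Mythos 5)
Your proposal is correct and matches the paper's intent exactly: the paper states Theorem~\ref{thm:sample} as a summary of Lemma~\ref{lemma:ssize} and Lemma~\ref{lemma:smallcut} with no separate proof, so combining the two via a union bound is precisely the intended argument. Your remark about the $1-2/n$ versus $1-1/n$ bookkeeping is a fair (and easily fixed) observation that the paper glosses over.
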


\section{Connectivity Verification via Random Edge-Sampling} 
\label{sec:conver}
In this section, we describe how to utilize the degree-based edge sampling from the previous section to solve the Connectivity Verification problem on a Congested Clique.
This randomized algorithm solves the Connectivity Verification problem in $O(\log \log \log n)$ rounds w.h.p. by combining the degree-based edge sampling with the Lotker et al. deterministic MST algorithm~\cite{lotker2005mstJournal}. 
We first describe the  Lotker et al. MST algorithm~\cite{lotker2005mstJournal}.

The Lotker et al. algorithm runs in phases, taking constant number of communication rounds per phase.
At the end of phase $k \geq 0$, the algorithm has computed a partition $\mathcal{F}^k = \{{F_1}^k , {F_2}^k, \ldots,  {F_m}^k\}$ of the nodes of $G$ into clusters.
Furthermore, for each cluster $F \in \mathcal{F}^k$, the algorithm has computed a minimum spanning tree $T(F)$. 
It is worth noting that every node in the network knows the partition $\mathcal{F}^k$ and the collection $\{T (F) \mid F \in \mathcal{F}^k\}$ of trees.
It is shown that at the end of phase $k$ the size of the smallest cluster is at least $2^{2^{k-1}}$ and hence $|\mathcal{F}^k| \leq n/2^{2^{k-1}}$. 
In the following we refer to the Lotker et al. algorithm as the \textsc{CC-MST} algorithm. 
Let \textsc{CC-MST}$(G, k)$ denote the execution of \textsc{CC-MST} on graph $G$ for $k$ phases. 
\begin{theorem}[Lotker et al.\cite{lotker2005mstJournal}]\label{thm:lotker}
  \textsc{CC-MST} computes an MST of an $n$-node edge-weighted clique in $O(\log \log n)$ rounds. 
  At the end of phase $k$, \textsc{CC-MST} has computed a partition $\mathcal{F}^k = \{F_1^k , F_2^k, \ldots,  F_m^k\}$ and $\mathcal{T}^k= \{T(F) \mid F \in \mathcal{F}^k\}$ which has the following properties: 
  (i) $m \leq 2^{2^{k-1}}$, 
  (ii) Every node knows $\mathcal{F}^k$ and $\mathcal{T}^k$, and
  (iii) If the largest weight of an edge in a cluster $F_i^k$ is $w$, then there is no edge with weight $w' < w$ such that it has one end point in $F_i^k$ and other in $F_j^k$ for any $j\neq i$. 
\end{theorem}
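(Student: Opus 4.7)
The plan is to prove Theorem~\ref{thm:lotker} by induction on $k$, following the Bor\r{u}vka paradigm while exploiting the fact that a cluster of size $s$ on the Congested Clique has collective per-round bandwidth $\Theta(s \cdot n)$. The base case is routine: in phase $0$ the partition $\mathcal{F}^0$ consists of $n$ singletons with trivial MSTs, every node knows everything, and invariant (iii) holds vacuously. The inductive goal for phase $k$ is to grow the minimum cluster size from $s_{k-1} = 2^{2^{k-2}}$ to $s_k = s_{k-1}^{2} = 2^{2^{k-1}}$ in $O(1)$ rounds, while preserving both global knowledge of the partition together with its MSTs and the cut-invariant (iii).

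For the inductive step, I would split phase $k$ into three bulk substeps, each realized in $O(1)$ rounds via Lenzen's routing protocol. \emph{Substep A (minimum outgoing edges).} Every node tags each of its incident edges with the label of the cluster containing the other endpoint (possible since every node knows $\mathcal{F}^{k-1}$), locally computes for each target cluster the lightest such edge, and routes these minima to the leader of its own cluster. Because there are at most $n / s_{k-1}$ target clusters, every node sends and receives $O(n)$ messages, so Lenzen's protocol finishes in $O(1)$ rounds. \emph{Substep B (local Bor\r{u}vka doubling).} Each cluster $F$ uses its $\Theta(s_{k-1} \cdot n)$ collective bandwidth to download the induced subgraph of the contracted component graph on the $s_{k-1}$ clusters that $F$ can reach along its lightest outgoing edges, and then simulates $\log s_{k-1}$ Bor\r{u}vka-style merge rounds on this subgraph internally. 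A standard Bor\r{u}vka analysis shows that this absorbs at least $s_{k-1}$ other clusters into $F$, yielding a new cluster of size at least $s_{k-1}^{2} = s_k$ and establishing clause (i). \emph{Substep C (dissemination).} The new partition $\mathcal{F}^k$ together with its MSTs $\mathcal{T}^k$, of total description size $O(n \log n)$ bits, is broadcast by one more call to Lenzen's protocol, preserving clause (ii).

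Correctness of the local merges, and hence the propagation of invariant (iii), follows because every edge added in Substep B is the minimum outgoing edge of some cluster at the moment it is selected; the cut property then guarantees that every such edge lies in the global MST and that the heaviest edge inside the new cluster does not exceed the weight of any edge leaving it. Iterating the inductive step until the unique surviving cluster contains all of $V$ takes $O(\log \log n)$ phases of $O(1)$ rounds each, giving the claimed overall time bound.

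The hard part is Substep B: one must verify that the subgraph a cluster downloads really fits within its bandwidth budget $\Theta(s_{k-1} \cdot n)$ over $O(1)$ rounds, and that $\log s_{k-1}$ simulated Bor\r{u}vka rounds truly force a squaring of cluster size rather than mere doubling. This delicate bandwidth-versus-growth balance is the technical heart of Lotker et al.'s argument; everything else is bookkeeping on top of Lenzen's routing.
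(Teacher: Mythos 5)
First, note that the paper offers no proof of this statement: Theorem~\ref{thm:lotker} is imported verbatim from Lotker et al., so your sketch can only be judged on its own merits. Its high-level shape (doubly-exponential growth of the minimum cluster size, $O(1)$ rounds per phase) is right, but Substep~B --- which you yourself flag as ``the technical heart'' --- is both underspecified and, as described, not workable. Having each cluster independently download ``the induced subgraph on the $s_{k-1}$ clusters that $F$ can reach along its lightest outgoing edges'' and locally simulate $\log s_{k-1}$ Bor\r{u}vka rounds has two problems: (a) which clusters $F$ will reach is only determined \emph{during} the merge process (after one merge, the new component's lightest outgoing edge may be incident to a cluster $F$ knew nothing about), so the download set is not well defined in advance; and (b) merging is a global equivalence-relation computation --- if each cluster runs its own simulation on its own partial view, nothing guarantees the local answers are mutually consistent. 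The actual Lotker et al.\ mechanism avoids both issues and is simpler: each fragment sends only its $s_{k-1}$ lightest outgoing edges to \emph{distinct} fragments to a single coordinator, for a total of $(n/s_{k-1}) \cdot s_{k-1} = n$ messages regardless of $s_{k-1}$; the coordinator computes the merge centrally and broadcasts it. The combinatorial lemma one must then prove is that this truncated edge set still forces every non-finished merged component to contain more than $s_{k-1}$ old fragments --- that is the squaring argument, and your sketch does not supply it.

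Second, your justification of invariant (iii) is incorrect. ``Every added edge is the minimum outgoing edge of some cluster at the moment it is selected'' gives the cut property (each added edge belongs to the MST) but does not imply that the heaviest edge \emph{inside} the new cluster is lighter than every edge \emph{leaving} it. Counterexample: fragments $A, B, C, D$ with $wt(A,B)=10$ the minimum outgoing edge of $A$, $wt(B,C)=1$ the minimum outgoing edge of $B$, and $wt(B,D)=5$. Plain Bor\r{u}vka merges $\{A,B,C\}$, whose maximum internal weight is $10$, while the outgoing edge $\{B,D\}$ has weight $5<10$, violating (iii). Establishing (iii) requires the specific rank-ordered merge schedule of Lotker et al., not generic Bor\r{u}vka; and (iii) is exactly what the paper later relies on (Lemma~\ref{lemma:phase1Finished}) to dispose of the $\infty$-weight edges, so it cannot be waved through.
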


Our connectivity verification algorithm runs in three phases. 
Initially, our graph can be viewed as having $n$ components - one for each vertex. 
In Phase~1 we reduce the number of components by running \textsc{CC-MST} for $O(\log \log \log n)$ phases. 
Phase~2 operates on the component graph induced by the edges selected in Phase~1  and samples edges from this component graph using degree-based probabilities as discussed in the earlier section. 
Phase~3 is executed on the component graph induced by edges selected in Phase 2. 
Each phase outputs a forest $\mathcal{T}$ and a component graph $G'$ induced by edges in $\mathcal{T}$, that is, 
at the end of each phase every node knows all the edges in $\mathcal{T}$ and knows which of the incident edges are the inter-component edges in $G'$.  
Given a subgraph of $G$ that is a tree, we call this tree \textit{finished} if it is a spanning tree of a connected component in the graph $G$.
A tree which is not finished is referred as \textit{unfinished} tree. 
Each phase construct trees, some of which might be unfinished. 
A finished tree need not play any further part in the algorithm.
We show that Phase~2 and Phase~3 run in $O(1)$ rounds each w.h.p. and 
at the end of Phase 3 all trees are finished. 

We make use of the following subroutine at the end of Phase 1 and Phase 2 to ``construct'' the component graph 
(refer to Subsection~\ref{sub:tech} for definitions and notations). 
The subroutine \textsc{BuildComponentGraph}$(G, \hat{E})$ takes a subset of edges of $\hat{E} \subseteq E$ as input and it is assumed that initially all nodes know all edges in $\hat{E}$ and components induced by $\hat{E}$.  
It returns the component graph $cg[G, \hat{E}]$. 
At the end of this subroutine each leader knows the inter-component edges incident on its component. 
This subroutine can be implemented in $O(1)$ rounds using \lra~as follows:
every node $v$ in a component $C$, for every incident edge $\{u, v\}$ such that $c(u) \neq C$ adds a message destined for $\ell(c(u))$ in the sending queue to notify $\ell(c(u))$ of the presence of the inter-component edge $\{C, c(u)\}$, if it already has such a message in the queue (due to a different incident edge $\{u', v\}, c(u')=c(u)$) then $v$ ignores this edge.
Hence the sending queue of each node contains at most $n$ messages (since there can be at most $n$ components). 
Each leader receives at most $n$ messages since each node is sending only a single message to a leader.
Therefore, we can use \lra~to route these messages in $O(1)$ rounds. 
After this step, every leader $\ell(C)$ knows the incident inter-component edges. 

Algorithm \textsc{ReduceComponents} describes Phase~1. 
\begin{algorithm}[H]
  \caption{Phase 1: \textsc{ReduceComponents} \label{algo:phase1}}
  \begin{algorithmic}[1]
    \REQUIRE A graph $G = (V, E)$. 
    \ENSURE  $\mathcal{T}_1$ - a spanning forest of $G$ with at most $n/\log^2 n$ unfinished trees and component graph induced by these edges  
    \STATE   Assign unit weights to edges in $G$ to obtain a weighted graph $G_w$; make $G_w$ a clique by adding edges not in $G$ and assign weight $\infty$ to these newly added edges.  
    \STATE   $(\mathcal{F}, \mathcal{T}_\infty) \leftarrow $ \textsc{CC-MST}$(G_w, \log \log \log n + 1)$
    \STATE $\mathcal{T}_1 \leftarrow \mathcal{T}_\infty \setminus \left\{\left\{u, v\right\} \in E(\mathcal{T}_\infty) \mid wt(u, v) = \infty \right\}$  
    \STATE $G_1 \leftarrow$ \textsc{BuildComponentGraph}$(G, \mathcal{T}_1)$
    \RETURN $(\mathcal{T}_1, G_1)$
  \end{algorithmic}
\end{algorithm}
\noindent Input to Algorithm \textsc{ReduceComponents} is a graph $G$. 
At the end of this algorithm, every node knows the ID of the leader of the component it belongs to and every leader knows incident inter-component edges in the component graph induced by edges selected during the execution. 
In Step~1, to every edge in the input graph $G$ we assign weight $1$; pairs of vertices not adjacent are assigned weight $\infty$. 
Step~2 simply executes \textsc{CC-MST} on this weighted clique for $\log \log \log n + 1$ phases which returns clusters $\mathcal{F}$ and a forest $\mathcal{T}_\infty$ of trees, one spanning tree per cluster. 
There might be few edges selected by \textsc{CC-MST} with weights $\infty$ and in Step~3 we discard these edges. 
By Theorem~\ref{thm:lotker}, every node knows $\mathcal{T}_\infty$ (so $\mathcal{T}_1$) and hence we can execute \textsc{BuildComponentGraph} (Step~4) in $O(1)$ rounds. 
At the end of \textsc{ReduceComponents} we have the following properties:
\begin{lemma}
  \label{lemma:phase1Finished}
 If a tree in $\mathcal{T}_\infty$ has an edge with weight $\infty$ then after removing this edge both of the obtained trees are either finished trees of $G$ or contains further $\infty$-weight edges. If we remove all the $\infty$-weight edges then all the newly obtained trees are finished trees of $G$. 
\end{lemma}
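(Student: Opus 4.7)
The plan is to prove both parts by combining the MST property of the trees produced by \textsc{CC-MST} with property (iii) of Theorem~\ref{thm:lotker}. The key observation is that an $\infty$-weight edge in $T(F)$ corresponds to a non-edge of $G$, and such an edge is used by \textsc{CC-MST} only because no cheaper alternative exists inside \emph{or} outside the cluster $F$.

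For the first part, fix a tree $T(F) \in \mathcal{T}_\infty$ containing an $\infty$-weight edge $e = \{u,v\}$, and let $T_1, T_2$ be the two subtrees obtained by deleting $e$, with vertex sets $V_1, V_2$ partitioning $F$. If some $T_i$ still contains an $\infty$-weight edge, it falls into the second case of the claimed alternative, so I may focus on a $T_i$ with no $\infty$-weight edge. Then every edge of $T_i$ has weight $1$ and therefore lies in $E(G)$, so $T_i$ is a tree of $G$ spanning $V_i$; what remains is to show $V_i$ is a connected component of $G$. Suppose for contradiction that $\{x,y\} \in E$ with $x \in V_i$ and $y \notin V_i$. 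Case (a): $y \in V_{3-i}$; then $\{x,y\}$ is a weight-$1$ edge lying entirely inside $F$, and swapping $e$ out for $\{x,y\}$ yields a spanning tree of $F$ of strictly smaller total weight, contradicting the minimality of $T(F)$. Case (b): $y$ lies in some other cluster $F' \ne F$; since the largest edge weight in $T(F)$ is at least $\infty$, property (iii) of Theorem~\ref{thm:lotker} forces every edge from $F$ to any other cluster to have weight $\geq \infty$, again contradicting the existence of the weight-$1$ edge $\{x,y\}$. Hence no such $G$-edge crosses $V_i$ to $V \setminus V_i$, so $T_i$ is a finished tree.

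The second part then follows by iterating the first part. Whenever a current piece still contains an $\infty$-weight edge, delete that edge and apply the first part to the resulting two subtrees. Each deletion strictly decreases the total number of remaining $\infty$-weight edges, so the procedure terminates, and at termination every surviving subtree is $\infty$-free. By the first part, every such $\infty$-free subtree produced during the splitting is a finished spanning tree of a connected component of $G$.

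The main subtlety is Case (b) of the swap argument: the natural MST-swap is local to $F$, but a hypothetical offending $G$-edge could leave $F$ and go to a different cluster of $\mathcal{F}$, and it is precisely here that property (iii) of Theorem~\ref{thm:lotker} is needed to rule out any cheap escape route from $F$. Everything else is routine.
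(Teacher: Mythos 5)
Your proof is correct and follows essentially the same route as the paper's: property (iii) of Theorem~\ref{thm:lotker} rules out weight-$1$ edges of $G$ leaving the cluster, and iterating over the remaining $\infty$-weight edges yields the second part. Your Case (a) --- the MST-swap argument ruling out a weight-$1$ edge between the two halves of the cluster --- is made explicit here whereas the paper leaves it implicit, so your write-up is, if anything, slightly more complete.
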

\begin{proof}
  Let $e = \{u, v\}$ be an edge in $T(F_{i^*})$  has weight $\infty$, i.e., $e\notin E$.  
  By Theorem~\ref{thm:lotker} (Property (iii)), all the incident edges on $F_{i^*}$ have weights $\infty$.
  In other words, there is no edge in $G$ which has exactly one endpoint in $F_{i^*}$. 
  If the trees obtained by removing $e$ does not contain any further $\infty$-weight edges then the both trees are finished trees. 
  If it contains $\infty$-weight edges then we repeat the above argument on the both of the trees until we obtain trees with no $\infty$-weight edges.
  By the earlier argument all these obtained trees are finished trees. 
\end{proof}
\begin{lemma}
  \label{lemma:phase1Unfinished}
  The number of unfinished trees in $\mathcal{T}_1$ are at most $\frac{n}{\log^2 n}$. 
\end{lemma}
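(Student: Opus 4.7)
The plan is to show that each unfinished tree in $\mathcal{T}_1$ can be charged to a distinct cluster produced by \textsc{CC-MST} in Step~2, and then to invoke Theorem~\ref{thm:lotker} to bound the number of such clusters.

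First, I would classify each tree in $\mathcal{T}_1$ according to the cluster from which it arose. Each cluster $F \in \mathcal{F}$ produced by Step~2 contributes a single tree $T(F) \in \mathcal{T}_\infty$, and $\mathcal{T}_1$ is obtained from $\mathcal{T}_\infty$ by deleting every $\infty$-weight edge. There are then two cases. If $T(F)$ contains no $\infty$-weight edges, it passes through unchanged as a single tree in $\mathcal{T}_1$, and this tree may or may not be finished. If instead $T(F)$ contains at least one $\infty$-weight edge, then removing all such edges shatters $T(F)$ into two or more subtrees, and Lemma~\ref{lemma:phase1Finished} guarantees that every one of these subtrees is finished. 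In particular, no subtree produced by splitting off $\infty$-edges can be unfinished. So every unfinished tree in $\mathcal{T}_1$ must be an untouched $T(F)$ for some cluster $F \in \mathcal{F}$, giving the bound (number of unfinished trees) $\leq |\mathcal{F}|$.

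To finish, I would apply the cluster-count estimate implicit in Theorem~\ref{thm:lotker} with $k = \log\log\log n + 1$. Unrolling the double-exponential growth of the minimum cluster size reduces the proof to the arithmetic inequality $|\mathcal{F}| \leq n/\log^2 n$. There is essentially no mathematical obstacle once Lemma~\ref{lemma:phase1Finished} is available; the content is the classification above plus a numerical calculation. The one spot that warrants care is matching the phase index used in the call \textsc{CC-MST}$(G_w, \log\log\log n + 1)$ against the indexing convention of Theorem~\ref{thm:lotker}, so that the denominator comes out as $\log^2 n$ rather than some smaller polylog that would be insufficient for the subsequent phases of the algorithm.
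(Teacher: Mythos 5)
Your proof is correct and follows essentially the same route as the paper: the paper likewise observes that, by Lemma~\ref{lemma:phase1Finished}, deleting the $\infty$-weight edges can only create finished trees, so every unfinished tree of $\mathcal{T}_1$ is an intact tree of $\mathcal{T}_\infty$, and then bounds $|\mathcal{T}_\infty|$ via property (i) of Theorem~\ref{thm:lotker}. Your version is just a more explicit write-up of that charging argument, and your remark about checking the phase-index arithmetic is a reasonable point of care that the paper's one-line proof glosses over.
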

\begin{proof}
  By Theorem~\ref{thm:lotker} (Property (i)), we have $|\mathcal{T}_\infty| = n/\log^2 n$. 
  By Lemma~\ref{lemma:phase1Finished}, by removing $\infty$-weight edges increases only the number of finished trees. 
  Therefore, the number of unfinished trees cannot be more than $n/\log^2 n$. 
\end{proof}
\noindent Also, it is easy to see that Phase~1 runs in $O(\log \log \log n)$ rounds, since Step~2 takes $O(\log \log \log n)$ rounds, but the rest take only $O(1)$ rounds each.   

Phase~2 runs on the component graph $G_1$ returned by Phase~1 and computes a spanning forest $\mathcal{T}_2$ of $G_1$ such that the component graph induced by $\mathcal{T}_2$ has at most $O(n)$ inter-component edges. 
Note that there might be some finished trees in $\mathcal{T}_1$. 
The components induced by finished trees do not have any incident edge in $G_1$ and hence the corresponding vertices will be isolated in $G_1$. 
Let $G'_1(V'_1, E'_1)$ be the graph obtained by removing isolated vertices from $G_1$. 
By Lemma~\ref{lemma:phase1Unfinished}, $|V'_1| \leq n/\log^2 n$. 
Let $v^*$ denote the vertex in $V$ with minimum ID.
\begin{algorithm}[H]
  \caption{Phase~2: \textsc{RemoveLargeCuts} \label{algo:phase2}}
  \begin{algorithmic}[1]
    \REQUIRE $G'_1(V'_1, E'_1)$ obtained by removing isolated vertices from $G_1$ where $V'_1 \subseteq V$ and $|V'_1| \leq \frac{n}{\log^2 n}$  
    \ENSURE  $\mathcal{T}_2$ - a spanning forest of $G'_1$ such that the number of edges in the component graph $G_2$ induced by $\mathcal{T}_2$ is $O(n)$.
    \STATE  $S \leftarrow \emptyset$. \\ For each edge $e = \{u, v\} \in E'_1$, add edge $e$ to $S$ with probability $\min\{1, \frac{50\log^2 n}{rd(e)}\}$ where $rd(e)$ is the rounded-degree of $e$ with respect to $G'_1$.
    \STATE  Gather $S$ at vertex $v^*$ (the node in $V$ with the lowest ID).
    \STATE  $v^*$ executes locally : $\mathcal{T}_2 \leftarrow $\textsc{SpanningForest}$(G'_1[S])$. 
    \STATE  $v^*$ assigns each edge in $\mathcal{T}_2$ to a node in $V$ such that each node is assigned a single edge and send edges to assigned nodes.  
    Each node in $V$ then broadcast the edge it received from $v^*$ so that all nodes now know $\mathcal{T}_2$.   
    \STATE  $G_2 \leftarrow $ \textsc{BuildComponentGraph}$(G'_1, \mathcal{T}_2)$ 
    \RETURN $(\mathcal{T}_2, G_2)$
  \end{algorithmic}
\end{algorithm}

Step~1 of Phase~2 can be implemented as follows. 
Each node in $G'_1$ broadcast its degree with respect to $G'_1$. 
An edge $\{u, v\}$ is ``charged'' to node $u$ if $rd(u) < rd(v)$ or $rd(u) = rd(v)$ and $ID(u) < ID(v)$. 
Node $u$ computes $rd(e)$ for each edge $e$ charged to it and then samples each $e$ independently  with probability $\min\{50\log^2 n/rd(e), 1\}$. 
Node $u$ constructs a queue of messages intended for node $v^*$ consisting of all edges it sampled. 
We show below that the contents of all these queues can be sent to $v^*$ in $O(1)$ rounds. 
Step~3 is a local step executed at $v^*$. 
Step~4 makes sure that each node knows all the components induced by sampled edges in Step~1. 
Therefore, Step~5 can be executed in $O(1)$ rounds. 
We now show that Step~2 can be implemented in $O(1)$ rounds by proving the following claim and then appealing to Lenzen's routing algorithm to route the messages in the queue of each node.  
\begin{lemma}
  \label{lemma:p2sample}
  $|S| = O(n)$ with probability at least $1 - \frac{1}{n}$.
\end{lemma}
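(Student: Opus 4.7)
The plan is to reduce this to essentially the same calculation as in Lemma~\ref{lemma:ssize}, taking advantage of the fact that $G'_1$ is much smaller than the original graph $G$. Specifically, by Lemma~\ref{lemma:phase1Unfinished}, the vertex set $V'_1$ of $G'_1$ has size at most $n/\log^2 n$, and the sampling probability $\min\{1, 50\log^2 n/rd(e)\}$ uses $\log^2 n$ (where $n = |V|$) rather than $\log^2 |V'_1|$, which only makes edges more likely to be sampled, so the bound from Lemma~\ref{lemma:ssize} applies directly.

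First, I would charge each edge $e = \{u,v\} \in E'_1$ to the endpoint with the smaller rounded-degree (breaking ties by ID), exactly as in Section~2, so that every edge is charged to a unique vertex. Then for a fixed vertex $u \in V'_1$, letting $E_u$ be the edges charged to $u$, each edge in $E_u$ is sampled with probability at most $50\log^2 n/rd(u)$, and since $|E_u| \le \deg_{G'_1}(u) \le 2 \cdot rd(u)$ we get $\mathbf{E}[|S \cap E_u|] \le 100 \log^2 n$. If $rd(u) \le 50\log^2 n$, then $|S \cap E_u| \le |E_u| \le 100\log^2 n$ deterministically; otherwise a Chernoff bound (identical to the one in the proof of Lemma~\ref{lemma:ssize}) yields $\Pr[|S \cap E_u| > 150\log^2 n] < 1/n^3$.

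A union bound over the at most $n/\log^2 n$ vertices of $G'_1$ then gives that, with probability at least $1 - (n/\log^2 n)\cdot (1/n^3) \ge 1 - 1/n^2 \ge 1 - 1/n$, every vertex of $G'_1$ is charged at most $150\log^2 n$ sampled edges. Summing over all vertices of $V'_1$ yields
\[
|S| \;\le\; |V'_1| \cdot 150 \log^2 n \;\le\; \frac{n}{\log^2 n} \cdot 150 \log^2 n \;=\; O(n),
\]
which establishes the claim.

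There is no real obstacle here beyond making the per-vertex Chernoff estimate and combining it with the size bound on $V'_1$; the only subtlety to flag is that $\log^2 n$ rather than $\log^2 |V'_1|$ appears in the sampling probability, but this is harmless since it only inflates the per-vertex budget by a constant factor while leaving the vertex-count savings from Phase~1 intact, which is exactly what drives the total down from $O(n\log^2 n)$ to $O(n)$.
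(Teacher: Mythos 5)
Your proof is correct and follows essentially the same route as the paper: the paper simply invokes Lemma~\ref{lemma:ssize} to get the per-vertex bound of $O(\log^2 n)$ charged sampled edges and multiplies by $|V'_1| \le n/\log^2 n$, whereas you unfold that lemma's charging-plus-Chernoff argument explicitly. Your extra care about the mismatch between $\log^2 n$ and $\log^2 |V'_1|$ in the sampling probability is a point the paper glosses over, and your handling of it is sound.
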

\begin{proof}
  By Lemma~\ref{lemma:ssize}, the number of sampled edges charged to a node is $O(\log^2 n)$ w.h.p. and therefore, 
  if a graph has $n'$ vertices then the number of sampled edges is $O(n'\cdot \log^2 n)$ w.h.p.. 
   The graph $G'_1$ is the component graph induced by unfinished trees in Phase~1 and we showed that $n' = |V'_1| \leq  n/\log^2 n$. 
   Therefore, $|S| = O(n)$ w.h.p..
\end{proof}
\noindent Since each node's queue can contain at most $O(n/\log^2 n)$ messages w.h.p. and since $|S| = O(n)$ w.h.p., $v^*$ has to receive at most $O(n)$ messages w.h.p., and hence we can route these messages in $O(1)$ rounds w.h.p. by using Lenzen's routing algorithm. 

\begin{lemma}
  \label{lemma:phase2}
  Phase~2 runs in $O(1)$ rounds w.h.p. and returns a spanning forest $\mathcal{T}_2$ such that the component graph $G_2$ induced by $\mathcal{T}_2$ has $O(n)$ edges.
\end{lemma}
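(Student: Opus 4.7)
The plan is to verify the two claims of Lemma~\ref{lemma:phase2} by walking through Algorithm~\ref{algo:phase2} step by step, using the sampling theorem from Section~\ref{sec:cut} and \lra{} as the main tools.

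First I would handle the round complexity. Step~1 requires each node in $V'_1$ to broadcast its degree in $G'_1$, then each node locally samples its charged edges; this is $O(1)$ rounds. Step~2 is the only nontrivial communication step, so I would argue it in detail: by Lemma~\ref{lemma:ssize} applied to $G'_1$, each node has at most $O(\log^2 n)$ sampled edges charged to it w.h.p., so the sending queue at each node contains $O(\log^2 n) \leq n$ messages; by Lemma~\ref{lemma:p2sample}, $|S| = O(n)$ w.h.p., so $v^*$ needs to receive only $O(n)$ messages. These bounds meet the preconditions of \lra{}, so Step~2 completes in $O(1)$ rounds w.h.p. Step~3 is a purely local computation at $v^*$ and costs no communication. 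For Step~4, since $\mathcal{T}_2 \subseteq S$ has $O(n)$ edges, $v^*$ can send one distinct edge to each node via \lra{} in $O(1)$ rounds, after which each node broadcasts its assigned edge so that every node learns all of $\mathcal{T}_2$; this is $O(1)$ rounds. Step~5 invokes \textsc{BuildComponentGraph}, which the text already establishes as $O(1)$ rounds. Summing gives $O(1)$ rounds w.h.p.

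Next I would handle the structural claim. Since $\mathcal{T}_2$ is a spanning forest of $G'_1[S]$, the connected components of the node set induced by $\mathcal{T}_2$ are exactly the components induced by $S$, so $G_2 = cg[G'_1, \mathcal{T}_2] = cg[G'_1, S]$. Now $S$ was obtained by sampling each edge of $G'_1$ independently with probability $\min\{50\log^2 n / rd(e), 1\}$ where rounded-degrees are taken with respect to $G'_1$; this is precisely the hypothesis of Theorem~\ref{thm:sample} applied to the $|V'_1|$-vertex graph $G'_1$. The theorem yields $|E(G_2)| = O(|V'_1|) = O(n/\log^2 n) = O(n)$ w.h.p.

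The main obstacle, such as it is, is a bookkeeping one: making sure that the random bounds used to route in Step~2 via \lra{} are applied correctly. Specifically, Lemma~\ref{lemma:ssize} bounds charged sampled edges per node by $O(\log^2 n)$, but this bound is with respect to the graph being sampled (here $G'_1$, not $G$), and Theorem~\ref{thm:sample} must also be applied to $G'_1$ rather than $G$. Once we are careful that all invocations of the sampling theorems refer to $G'_1$, the remaining verifications are routine. A final union bound over the failure events of Lemma~\ref{lemma:p2sample}, Lemma~\ref{lemma:ssize}, and Theorem~\ref{thm:sample} (each failing with probability $\leq 1/n$) gives the overall w.h.p.\ guarantee.
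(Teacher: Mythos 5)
Your proposal is correct and follows essentially the same route as the paper: the paper's proof simply points to the preceding discussion (Lemma~\ref{lemma:ssize} and Lemma~\ref{lemma:p2sample} plus Lenzen's routing) for the $O(1)$-round claim and invokes Theorem~\ref{thm:sample} on $G'_1$ for the $O(n)$ bound on inter-component edges. Your version just spells out the per-step message bounds and the fact that $cg[G'_1,\mathcal{T}_2]=cg[G'_1,S]$, which the paper leaves implicit.
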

\begin{proof}
  The discussion just before Lemma~\ref{lemma:p2sample} proves that each step in Algorithm~\ref{algo:phase2} can be implemented in $O(1)$ rounds w.h.p.. 
  By Theorem~\ref{thm:sample}, the number of inter-component edges in $G_2$ is $O(n)$ w.h.p..
\end{proof}

\noindent We execute the Phase~3 on the component graph $G_2$ obtained in Phase~2.  
Let $G'_2(V'_2, E'_2)$ denote the graph obtained by removing isolated vertices from $G_2$. 
Phase~3 computes $\mathcal{T}_3$ -a spanning forest of $G'_2$. 
\begin{algorithm}[H]
  \caption{Phase 3: \textsc{HandleSmallCuts}}
  \begin{algorithmic}[1]
    \REQUIRE $G'_2(V'_2, E'_2)$ obtained by removing isolated vertices from $G_2$ where $V'_2 \subset V$ and $|E'_2| = O(n)$
    \ENSURE  $\mathcal{T}_3$ - a spanning forest of $G'_2$.
    \STATE   Gather $E'_2$ at vertex $v^*$ (the node in $V$ with the lowest ID).
    \STATE   $v^*$ locally executes: $\mathcal{T}_3 \leftarrow$ \textsc{SpanningForest}$(G'_2)$.  
    \STATE   $v^*$ assigns each edge to a node in $V$ such that each node is assigned a single edge and send edges to assigned nodes.  
	    Each node then broadcast the edge it received so that all nodes now know $\mathcal{T}_3$.
    \STATE   $G_3 \leftarrow$ \textsc{BuildComponentGraph}$(G'_2, \mathcal{T}_3)$
    \RETURN  $(\mathcal{T}_3, G_3)$. 
  \end{algorithmic}
\end{algorithm}
By Lemma~\ref{lemma:phase2}, the number of inter-component edges ($|E'_2|$) in the component graph $G'_2$ is $O(n)$ and the degree of each node in $G'_2$ is at most $O(n/\log^2 n)$. 
Therefore, Step~1 can be executed in $O(1)$ rounds using Lenzen's routing algorithm.
Algorithm \textsc{Conn} summarizes our algorithm. 
\begin{algorithm}[H]
  \caption{\textsc{Conn}\label{algo:conn}}
  \begin{algorithmic}[1]
    \REQUIRE $G(V,E)$
    \ENSURE  a maximal spanning forest of $G$
    \STATE   $(\mathcal{T}_1, G_1) \leftarrow $ \textsc{ReduceComponents}$(G)$
    \STATE   $(\mathcal{T}_2, G_2) \leftarrow $ \textsc{RemoveLargeCuts}$(G_1)$ 
    \STATE   $(\mathcal{T}_3, G_3) \leftarrow $ \textsc{HandleSmallCuts}$(G_2)$
    \RETURN  $\left\{\mathcal{T}_1 \cup \mathcal{T}_2 \cup \mathcal{T}_3\right\}$
  \end{algorithmic}
\end{algorithm}
\noindent We now prove that Algorithm \textsc{Conn} solves the Connectivity Verification problem. 
\begin{lemma}
   At the end of Algorithm \textsc{Conn} every node in $G$ knows a spanning forest with exactly as many trees as the number of connected components in $G$, that is, \textsc{Conn} returns a maximal spanning forest of $G$.
\end{lemma}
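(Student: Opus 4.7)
The plan is to decompose the claim into three sub-claims: (i) $\mathcal{T}_1 \cup \mathcal{T}_2 \cup \mathcal{T}_3$ is acyclic, (ii) its connected components on the vertex set $V$ coincide with those of $G$, and (iii) every node knows the full forest at termination. Note that the high-probability guarantees of Phases 2 and 3 only concern running time and the size bounds $|S| = O(n)$ and $|E'_2| = O(n)$; conditioned on those bounds, correctness is deterministic and is the only thing I need to verify here.

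Part (iii) is immediate from the algorithm: Theorem~\ref{thm:lotker} ensures every node knows $\mathcal{F}$ and hence $\mathcal{T}_1$ after Phase~1, while Step~4 of Algorithm~\textsc{RemoveLargeCuts} and Step~3 of Algorithm~\textsc{HandleSmallCuts} explicitly have $v^*$ assign each edge of $\mathcal{T}_2$ (respectively $\mathcal{T}_3$) to a distinct vertex, which then broadcasts it; together with $|\mathcal{T}_2|, |\mathcal{T}_3| \le n-1$, this disseminates both forests in $O(1)$ rounds.

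For (i), I would proceed level by level. $\mathcal{T}_1$ is a forest in $G$ since \textsc{CC-MST} returns one spanning tree per cluster (Theorem~\ref{thm:lotker}) and Step~3 of Phase~1 only deletes edges. Each edge of $\mathcal{T}_2$ is an inter-component edge of $G_1$ and therefore connects two distinct $\mathcal{T}_1$-components; because $\mathcal{T}_2$ is itself a spanning forest of $G'_1$, any cycle in $\mathcal{T}_1 \cup \mathcal{T}_2$ would project to a cycle in $\mathcal{T}_2$, a contradiction. An identical argument one level up shows that $\mathcal{T}_3$, being a spanning forest of $G'_2$ and consisting of inter-component edges of $G_2$, introduces no cycle among $(\mathcal{T}_1 \cup \mathcal{T}_2)$-components.

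Part (ii) is the heart of the claim, and I would prove it by showing that for every edge $\{u,v\} \in E$ the endpoints lie in the same tree of the output forest; maximality then follows by transitivity. Consider three cases on the components of $u$ and $v$: (a) if $u,v$ lie in the same $\mathcal{T}_1$-component $C$, then the spanning tree $T(C) \subseteq \mathcal{T}_1$ already connects them; (b) if $u \in C_u$ and $v \in C_v$ with $C_u \ne C_v$ but $C_u, C_v$ lie in the same $\mathcal{T}_2$-component $D$, then $\{u,v\}$ witnesses the edge $\{C_u,C_v\} \in E'_1$, so the $\mathcal{T}_2$-path from $C_u$ to $C_v$ inside $D$ combined with internal $\mathcal{T}_1$-paths yields a $u$-$v$ path in $\mathcal{T}_1 \cup \mathcal{T}_2$; (c) otherwise, $C_u, C_v$ lie in different $\mathcal{T}_2$-components $D_u, D_v$, so $\{u,v\}$ witnesses an inter-component edge $\{D_u, D_v\}$ in $G_2$, which by the definition of $\mathcal{T}_3$ as a spanning forest of $G'_2$ is joined by a $\mathcal{T}_3$-path, and lifting this path back through $\mathcal{T}_2$ and $\mathcal{T}_1$ yields a $u$-$v$ path in the output forest. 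The main obstacle is being careful about the correspondence between edges in $E$, inter-component edges of $G_1$, and inter-component edges of $G_2$: specifically, verifying that whenever $\{u,v\} \in E$ crosses a boundary at level $i$ the component-graph at level $i$ retains an edge across the corresponding pair of super-nodes. This is exactly what the definition of $cg[\cdot,\cdot]$ guarantees, so once the case analysis is phrased carefully the argument goes through without any appeal to the large-cut sampling theorem — that theorem's role is confined to bounding $|E'_2|$, not to ensuring correctness.
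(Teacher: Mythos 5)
Your proposal is correct, and its mathematical core coincides with the paper's: every edge of $G$ whose endpoints are still in different components after one phase induces an inter-component edge that the next phase is guaranteed to process (your three-case analysis (a)--(c) is a spelled-out version of the paper's ``$u$ and $v$ were in separate components at the end of Phase~1 and Phase~2, but then Phase~3 inspects $\{u,v\}$''), and no tree of the output can straddle two components of $G$ because the only non-edges of $G$ ever introduced are the weight-$\infty$ edges of Phase~1, which Step~3 of \textsc{ReduceComponents} removes. The packaging differs: the paper argues by contradiction on the counts $C$ and $M$ of components and trees, whereas you give a direct structural argument decomposed into acyclicity, coincidence of components, and global knowledge. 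Your version is in fact somewhat more complete --- the paper's proof never verifies that $\mathcal{T}_1\cup\mathcal{T}_2\cup\mathcal{T}_3$ is acyclic (needed for it to literally be a forest) nor that every node learns it, both of which the lemma statement asserts and you handle explicitly; your observation that the high-probability events only affect round complexity and size bounds, not correctness, is also accurate and not stated in the paper. One point to make explicit if you write this up: in cases (b) and (c) you should note that the relevant components are non-isolated in $G_1$ (resp.\ $G_2$) precisely because the witnessing edge $\{u,v\}$ exists, so they survive the passage from $G_1$ to $G'_1$ and from $G_2$ to $G'_2$; this is minor but needed for the spanning-forest computations of Phases~2 and~3 to actually cover them.
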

\begin{proof}
  Let $\mathcal{M} = \mathcal{T}_1 \cup \mathcal{T}_2 \cup \mathcal{T}_3$ denote the spanning forest returned by Algorithm \textsc{Conn}. 
  Let $C$ be the number of connected components in $G$. 
  Let $M$ be the number of maximal trees in $\mathcal{M}$ (which are the number of vertices in $G_3$). 
  We want to show that $C = M$. 

  Assume $C < M$.
  Therefore, there exists at least one edge $\{u, v\} \in E$ such that $u$ and $v$ are in the same connected component of $G$ but they are not in the same tree in $\mathcal{M}$. 
  Let $u$ is in tree $T_i \subset \mathcal{M}$ and $v$ is in tree $T_j \subset \mathcal{M}$, $i\neq j$, that is $T_i$ and $T_j$ are the unfinished trees.
  It means $u$ and $v$ were in the separate components at the end of Phase~1 and Phase~2. 
  But then in Phase~3, the edge $\{u, v\}$ is inspected and they will be in the same spanning tree computed by Phase~3. 
  Hence a contradiction.

  Now assume $C > M$. 
  This is possible only if we add an edge $e \notin E$ to $\mathcal{M}$ during the execution of \textsc{Conn}. 
  We only add additional edges in Phase~1 but we assign weight $\infty$ to edges which are not in $E$ and these edges are removed from $\mathcal{M}$.  
  Therefore the additional edge must have weight $\infty$ and hence not in $\mathcal{T}$ and won't be present in $\mathcal{M}$. 
  
  \noindent By combining the above two arguments we have $C = M$.
\end{proof}
\noindent This lemma establishes the correctness of our algorithm. 
The discussion in this section also shows that \textsc{Conn} runs in $O(\log \log \log n)$ rounds with high probability. 
We summarize the result of this section in the following theorem. 

\begin{theorem}
  Algorithm \textsc{Conn} solves the Connectivity Verification problem in $O(\log \log \log n)$ rounds with probability at least $1 - \frac{1}{n}$. 
\end{theorem}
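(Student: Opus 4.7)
The plan is straightforward: the correctness half of the theorem is immediate from the preceding lemma, so the proof reduces to a round-complexity accounting across the three phases of Algorithm \textsc{Conn}, combined with a union bound over the probabilistic events coming from Phases~2 and~3.

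First I would bound Phase~1. Steps~1 and~3 of \textsc{ReduceComponents} are purely local. Step~2 invokes \textsc{CC-MST} for $\log \log \log n + 1$ iterations, and Theorem~\ref{thm:lotker} guarantees that each iteration takes $O(1)$ rounds, contributing $O(\log \log \log n)$ rounds total. Step~4 invokes \textsc{BuildComponentGraph}, which the earlier discussion implements in $O(1)$ rounds using \lra. Hence Phase~1 terminates deterministically in $O(\log \log \log n)$ rounds.

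Next I would bound Phases~2 and~3. Lemma~\ref{lemma:phase2} already states that Phase~2 runs in $O(1)$ rounds with probability at least $1-1/n$ and produces $\mathcal{T}_2$ whose induced component graph satisfies $|E'_2| = O(n)$. Given these guarantees, Phase~3 is easily seen to be $O(1)$ rounds: the bound $|V'_2| \leq n/\log^2 n$ inherited from Phase~1 implies that each vertex of $G'_2$ holds at most $O(n/\log^2 n)$ adjacent edges, while $v^*$ receives $O(n)$ messages during the gathering step, so both input conditions of \lra~are met and Step~1 completes in $O(1)$ rounds; the remaining steps amount to a local computation at $v^*$, a one-edge broadcast per node, and another \textsc{BuildComponentGraph}, each of which is $O(1)$ rounds.

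Summing the three bounds gives a total of $O(\log \log \log n) + O(1) + O(1) = O(\log \log \log n)$ rounds, and a union bound over the constantly many random events coming from Lemmas~\ref{lemma:p2sample} and~\ref{lemma:phase2} preserves the probability bound of at least $1 - 1/n$. There is no genuine obstacle here: the only thing that has to be checked carefully is that every invocation of \lra~indeed satisfies its send/receive load preconditions, and this is precisely the bookkeeping enabled by Lemma~\ref{lemma:phase1Unfinished} (to control the size of $V'_1$), Theorem~\ref{thm:sample} (to control $|\hat{E}|$ and $|E'|$), and Lemma~\ref{lemma:phase2} (to hand those bounds onward to Phase~3).
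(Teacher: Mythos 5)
Your proof is correct and follows essentially the same route as the paper, which likewise derives the theorem from the correctness lemma on maximal spanning forests together with the phase-by-phase round accounting (Phase~1 in $O(\log\log\log n)$ rounds via \textsc{CC-MST}, Phases~2 and~3 in $O(1)$ rounds each w.h.p.\ via Lemma~\ref{lemma:phase2} and the load bounds for \lra). The only cosmetic point is that the union bound over a constant number of $1/n$-failure events strictly yields $1 - O(1)/n$ rather than $1-1/n$, but the paper is no more precise on this.
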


\section{Exact MST via Random Edge-Sampling} 
\label{sec:mst}
In this section we show how to solve the MST problem on a Congested Clique using ideas from our Connectivity Verification algorithm.
Initially each node is a component and hence initially there are $n$ components. 
We first reduce the number of components to $n/\log^2 n$ using the Lotker et al. MST algorithm similar to Phase~1 of our Connectivity Verification algorithm in $O(\log \log \log n)$ rounds. 
Then we reduce the MST problem on this graph to two subproblems using a sampling lemma by Karger et al.~\cite{KKT1995MST}.
Each subproblem has to compute a MST of a weighted graph whose average degree is at most $\sqrt{n}$. 
We first show that this reduction can be completed in $O(1)$ rounds. 
Finally we show how to compute MST on a graph with average degree $\sqrt{n}$ and number of components $n/\log^2 n$ in $O(1)$ rounds.

\subsection{Reducing Components and Edges}
We first reduce the number of components to at most $n/\log^2 n$ components by executing \textsc{CC-MST} for $\log \log \log n + 1$ phases similar to our Connectivity Verification algorithm. 
Let $\mathcal{T}_1$ be the spanning forest and $G_1$ be the component graph obtained by executing the above step. 
By the property of \textsc{CC-MST}, $\mathcal{T}_1$ is a subset of a MST of $G$ (Theorem~\ref{thm:lotker}). 
Our goal now is to complete this MST by deciding which of the edges in $G_1$ are in the MST. 

Karger et al.~\cite{KKT1995MST} designed a randomized linear-time algorithm to find a MST in a edge-weighted graph in a sequential setting (RAM model).
A key component of their algorithm is a random edge sampling step to discard edges that cannot be in the MST. 
For completeness we state their sampling result and the necessary terminology.
\begin{definition}[$F$-light edge~\cite{KKT1995MST}] 
 Let $F$ be a forest in a graph $G$ and let $F(u,v)$ denote the path (if any) connecting $u$ and $v$ in $F$. 
 Let $wt_F(u, v)$ denote the maximum weight of an edge on $F(u, v)$ (if there is no path then $wt_F(u, v) = \infty$). 
 We call an edge $\{u, v\}$ is \emph{$F$-heavy} if $wt(u, v) > wt_F(u, v)$, and \emph{$F$-light} otherwise. 
\end{definition}
\noindent Karger et al.~\cite{KKT1995MST} proved the following lemma. 
\begin{lemma}[KKT Sampling Lemma~\cite{KKT1995MST}]
 \label{lemma:kkt}
Let $H$ be a subgraph obtained from $G$ by including each edge independently with probability $p$,
and let $F$ be the minimum spanning forest of $H$. 
The number of $F$-light edges in $G$ is at most $n/p$ w.h.p.. 
\end{lemma}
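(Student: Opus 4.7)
The plan is to build the MSF of $H$ by running Kruskal's algorithm on the edges of $G$ in increasing order of weight, and to characterize $F$-light edges through this incremental process. The first step is to establish the following equivalence: an edge $e = \{u,v\}$ is $F$-light if and only if, at the moment Kruskal reaches $e$, its endpoints $u$ and $v$ lie in different connected components of the partial forest $F'$ built so far from those $H$-edges of weight strictly less than $wt(e)$. This follows from the cycle property of MSFs: if $F'$ already connects $u$ and $v$ using lighter edges, then $wt_F(u,v) \le wt(e)$ and $e$ is $F$-heavy; otherwise $wt_F(u,v) > wt(e)$ and $e$ is $F$-light.

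The second step is to couple the random sampling with an i.i.d.\ sequence of Bernoulli$(p)$ coin flips $B_1, B_2, \ldots$, where $B_k$ records whether the $k$-th $F$-light edge encountered in Kruskal order is placed into $H$. The $B_k$ are i.i.d.\ because each edge of $G$ is independently included in $H$ and because $F$-lightness of an edge $e$ depends only on the coin flips of the strictly lighter edges processed before $e$. Moreover, every $F$-light edge with $B_k = 1$ joins two distinct components of the current $F'$ and is therefore added to $F'$; since at the end $F' = F$, the total number of successes among the $B_k$'s equals $|F| \le n - 1$.

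Let $X$ be the number of $F$-light edges of $G$ and let $\tau_n$ be the waiting time to the $n$-th success in an i.i.d.\ Bernoulli$(p)$ sequence extending the $B_k$. The preceding observation forces $X < \tau_n$ almost surely: otherwise, by the time $\tau_n$ $F$-light edges have been processed we would already have $n$ successes, forcing $|F'| \ge n$ and contradicting $|F'| \le n - 1$. Since $\tau_n$ is negative-binomially distributed with mean $n/p$, a standard Chernoff bound gives $\Pr[\tau_n > c\cdot n/p] \le \exp(-\Omega(n))$ for a suitable constant $c$, and hence $X = O(n/p)$ with high probability, as claimed.

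The main subtlety I expect is in justifying the first-step equivalence cleanly: one must argue that the partial forest $F'$ at the moment $e$ is processed coincides exactly with the subset of edges of $F$ of weight strictly less than $wt(e)$, and one must fix a tie-breaking rule among equal-weight edges so that $F$ itself and the notion of $F$-light are well defined. Once this characterization is in hand, the coupling to an i.i.d.\ Bernoulli sequence, the deterministic cap of $n-1$ successes, the resulting stochastic domination by $\tau_n$, and the concentration inequality are all routine.
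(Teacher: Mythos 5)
Your proof is correct and is essentially the standard Karger--Klein--Tarjan argument: the paper itself offers no proof of Lemma~\ref{lemma:kkt}, simply citing \cite{KKT1995MST}, and your Kruskal-simulation with deferred coin flips, the cap of $n-1$ successes, and the negative-binomial tail bound is exactly how the cited source establishes it. The only nitpick is that your argument yields $O(n/p)$ (or $(1+\epsilon)n/p$ for any fixed $\epsilon>0$) rather than the literal ``at most $n/p$'' in the statement, but that is the form in which the bound is actually true and used; the subtleties you flag (tie-breaking for equal weights, and that the partial forest at the time $e$ is processed equals the sub-$wt(e)$ portion of $F$) are real but routine, and you resolve them correctly.
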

\noindent The implication of the above lemma is that if we set $p = 1/\sqrt{n}$ then the number of sampled edges in $H$ and the number of $F$-light edges in $G$ both are $O(n^{3/2})$ w.h.p.. 
Also, none of the $F$-heavy edges can be in a MST of $G$. 
Therefore if we compute a minimum spanning forest $F$ of $H$ then we can discard all the $F$-heavy edges and 
it is sufficient to compute a MST of graph induced by $F$-light edges in $G$. 
We have reduced the problem in two problems: (i) compute minimum spanning forest $F$ of $H$ where the number of edges in $H$ is $O(n^{3/2})$ w.h.p. and (ii) compute minimum spanning tree of the graph induced by $F$-light edges in $G$. 
Note that these two problems cannot be solved in parallel since the later problem depends on the output of the first problem. 

Algorithm~\ref{algo:kkt} summarizes our approach. 
In the beginning of Algorithm \textsc{Exact-MST} every node knows weights of incident edges and at the end of the execution every node knows all the edges that are in a MST computed by the algorithm. 
Algorithm \textsc{SQ-MST} computes a MST of a graph with $O(n/\log^2 n)$ vertices and $O(n^{3/2})$ edges and at the end of the execution of this algorithm, all nodes knows the MST computed by it.
In the next subsection we describe this algorithm and show that it runs in $O(1)$ rounds w.h.p.. 
\begin{algorithm}[H]
 \caption{\textsc{Exact-MST}\label{algo:kkt}}
 \begin{algorithmic}[1]
  \REQUIRE An edge-weighted clique $G(V, E)$
  \ENSURE  MST of $G$
  \STATE   $(\mathcal{T}_1, G_1) \leftarrow $\textsc{CC-MST}$(G, \log \log \log n + 1)$
  \STATE   $H \leftarrow $  a subgraph of $G_1$ obtained by sampling each edge independently with probability $\frac{1}{\sqrt{n}}$ 
  \STATE   $F \leftarrow $ \textsc{SQ-MST}$(H)$
  \STATE   $E_\ell \leftarrow \left\{\left\{u, v\right\} \in E(G_1) \mid \{u, v\} \mbox{ is } F\mbox{-light}\right\}$ 
  \STATE   $T_2 \leftarrow $ \textsc{SQ-MST}$(E_\ell)$
  \RETURN  $T_1 \cup T_2$
 \end{algorithmic}
\end{algorithm}

\subsection{Computing MST of $O(n^{3/2})$-size Graph}
In this subsection we show how to compute an MST of a subgraph $G'$ of $G$ with $O(n^{3/2})$ edges and $n/\log^2 n$ vertices  using our edge-sampling technique similar to the Connectivity Verification algorithm.

We have a graph $G'(V', E')$ ($V' \subset V, E' \subset E)$ with at most $n/\log^2 n$ vertices and $O(n^{3/2})$ edges  where $n$ is the number of nodes in the Congested Clique network $G$.
The bounds on number of vertices and number of edges are critical ensuring that our MST algorithm runs in $O(1)$ rounds. 
Algorithm \textsc{SQ-MST} (MST algorithm on a graph with average degree $\sqrt{n}$) describes our MST algorithm on subgraph $G'$. 
The high level idea is to sort the edges based on their weights, that is, 
each node needs to know the \emph{rank} $r(e)$ of each incident edge which is the index of $e$ in a global enumeration of the sorted edges.
This sorting problem can be solved in $O(1)$ rounds on the Congested Clique by using Lenzen's distributed sorting algorithm~\cite{lenzen2013routing}. 
Then each node partitions the incident edges based on their ranks. 
Thus we partition $E'$ into at most $\sqrt{n}$ sets $E_1, E_2, \ldots E_p$ ($p \leq \sqrt{n}$) each containing $n$ edges ($E_p$ might have less than $n$ edges) 
such that $E_1$ contains all the edges whose ranks are in the range $1$ to $n$, $E_2$ contains the edges with ranks between $n+1$ and $2n$, and so on. 
That is, each node knows the partition index of each incident edge. 

In the next step we gather set $E_i$ at a single \textit{guardian} node $g(i)$.
This can be done in $O(1)$ rounds as well because $|E_i| = O(n)$. 
The role of a guardian node $g(i)$ is to determine which of the edges in $E_i$ are a part of the MST. 
Specifically, $g(i)$ wants to know for each edge $e \in E_i$ whether there is a path between its endpoints in the graph induced by edges with ranks less than $r(e)$. 
That is, for each edge $e_j$ that a $g(i)$ has, $g(i)$ needs to find out whether there is a path between endpoints of $e_j$ in the graph induced by edges  $\left\{\cup_{k=0}^{i-1} E_k\right\} \cup \{ e_\ell \in E_i \mid r(e_\ell) < r(e_j)\}$. 
Thus each $g(i)$ needs to know a spanning forest of the graph $G_i$ induced by edges $\cup_{j < i} E_j$, and we show that it can be computed by executing the similar steps as Phase 2 and 3 of the Connectivity Verification algorithm on $G_i$ in $O(1)$ rounds.
Steps~\ref{algo:sqmst:2s} to~\ref{algo:sqmst:2e} of Algorithm \textsc{SQ-MST} are similar to \textsc{RemoveLargeCuts} procedure of Algorithm \textsc{CONN}(Algorithm~\ref{algo:conn}) which samples the incident edges (with respect to $G_i$) with probability based on its degree (with respect to $G_i$). 
Step~\ref{algo:sqmst:3} is similar to \textsc{HandleSmallCuts} which simply gathers the inter-component edges at the guardian node and process it locally. 
There are $\sqrt{n}$ such guardians - one for each partition $E_i$ and hence the challenge is executing $\sqrt{n}$ instances of these steps in parallel on the Congested Clique network. 
What helps in showing that these $p \leq \sqrt{n}$ instances can be executed in parallel is that $G'$ has at most $n/\log^2 n$ vertices and $O(n^{3/2})$ edges. 
The procedure \textsc{RouteLabels} implements this parallel execution in $O(1)$ rounds w.h.p..
We describe this procedure along with how to run all these steps in parallel in the next subsection. 
\begin{algorithm}[t]
  \caption{\textsc{SQ-MST} \label{algo:sqmst}}
  \begin{algorithmic}[1]
    \REQUIRE a weighted subgraph $G'(V', E', wt)$ with $\frac{n}{\log^2 n}$ vertices and $O(n^{3/2})$ edges
    \ENSURE  an MST of $G'$
    \STATE   $r(E) \leftarrow$ \textsc{DistributedSort}$(E)$ in non-decreasing order of edge-weights. 
    \STATE   Partition edges in $E$ based on their ranks $r(e)$ into $p$ partitions $E_1,E_2,\ldots E_p$ ($p\leq \sqrt{n}$),
	     each partition having $n$ edges ($E_p$ might have less than $n$ edges) such that 
	     $E_1$ contains edges with ranks $1, 2, \ldots, n$; $E_2$ contains edges with ranks $n+1, n+2, \ldots, 2n$; and so on.
    \STATE   Let $g(i)$ be the node in $G$ with ID $i$ and
	     assign $g(i)$ as the guardian of partition $i$. \\
	     Gather partition $E_i$ at $g(i)$.
    \FOR{$i=1$ \TO $i=p$ \textbf{in parallel}}
        \STATE Let $G_i = (V', \cup_{j=1}^{i-1} E_j)$.\label{algo:sqmst:2s} \\ 
	       $S_i \leftarrow \emptyset$.
	\STATE Each node $v$ executes this: for each incident edge $e \in \cup_{j=1}^{i-1} E_j$, 
	       adds edge $e$ to $S_i$ with probability $\min\left\{1, \frac{50\log^2 n}{rd(v)}\right\}$ where
	       $rd(v)$ is the rounded-degree of node $v$ with respect to $G_i$. 
	\STATE Gather $S_i$ at $g(i)$.
	\STATE $g(i)$ executes locally: $\mathcal{T}_i \leftarrow \textsc{SpanningForest}(G_i[S_i])$
	\STATE $g(i)$ informs each $v\in G_i$ about its component label $c_i(v)$ induced by $\mathcal{T}_i$. \label{algo:sqmst:2m}  
	\STATE Execute \textsc{RouteLablesAndInterComponentEdges}. It does the following: \\
	       (a). Identifies the inter-component edges ($\hat{E}_i$)  in $cg[G_i, \mathcal{T}_i]$. \\
	       (b). Gather $\hat{E}_i$ at $g(i)$.  \label{algo:sqmst:2e}
	\STATE $g(i)$ executes locally: \\ \label{algo:sqmst:3}
	       (a). $\mathcal{T'}_i \leftarrow \textsc{SpanningForest}(G_i[\hat{E}_i])$ \\ 
	       (b). $g(i)$ processes edges in $E_i$ in rank-based order. \\
		    For each edge $e_j=\{u, v\}$ in $e_1, e_2, \ldots$ : \\
		    if there is path between $u$ and $v$ in $\mathcal{T}_i \cup \mathcal{T'}_i \cup \{e_\ell \mid \ell < j\}$ then \\
		    discard $e_j$ else add $e_j$ to $\mathcal{M}_i$.  
    \ENDFOR
    \RETURN  $\cup_{i=1}^{p}\mathcal{M}_i$
  \end{algorithmic} 
\end{algorithm}

\subsection{Parallel Execution in Algorithm \textsc{SQ-MST}}
In this subsection we show that the for-loop on Line~\ref{algo:sqmst:2s}-\ref{algo:sqmst:3} of Algorithm~\ref{algo:sqmst} can be implemented in $O(1)$ rounds w.h.p..

Consider Lines~\ref{algo:sqmst:2s}-\ref{algo:sqmst:2m}. 
In these steps we sample incident edges as described in the algorithm.
We can gather $S_i$ at $g(i)$ for each $i$ in $O(1)$ rounds in parallel as follows: 
each node has $O(p\cdot\log^2 n)$ sampled edges (Lemma~\ref{lemma:ssize}) over all $p$ execution. 
There are at most $n/\log^2 n$ vertices in $G'$ therefore, $|S_i| = O(n)$ for each $i$. 
Hence each node needs to send $O(p\cdot \log^2 n) = O(n)$ messages and each guardian is a receiver of $O(n)$ messages.
Therefore we can deliver these messages in $O(1)$ by appealing to \lra. 

In Line~7, $g(i)$ locally computes a spanning forest $\mathcal{T}_i$ induced by edges in $S_i$. 
Let $c_i(v)$ denote the label of the component in $\mathcal{T}_i$, $v$ belong to. 
For each $v$, $g(i)$ sends $c_i(v)$ to $v$ (Line~8) and this can be done for all $i$ in parallel. 
Now each $v$ posses a $p$-size vector $\vec{C}(v)=\left(c_1(v), c_2(v), \ldots, c_p(v)\right)$ consisting of labels obtained from each guardian.
Let $\hat{E}_i$ denote the inter-component edges in the component graph $cg[G_i, \mathcal{T}_i]$.
The goal of \textsc{RouteLabels} is to identify which of the incident edges on $v$ are in $\hat{E}_i$ and gather $\hat{E}_i$ at $g(i)$ for each $i$ in parallel in $O(1)$ rounds w.h.p..   
Recall that an edge $\{u, v\} \in G_i$ is an inter-component edge in $cg[G_i, \mathcal{T}_i]$ if and only if $c_i(u) \neq c_i(v)$. 
This goal is similar to the goal of procedure \textsc{BuildComponentGraph} but here we need to do it for $p$ different instances in parallel. 
Notice that this is a non-trivial task since each node has a $p$-size label vector $\vec{C}(v)$ and there can be as many as $\Omega(\sqrt{n})$ neighbors to which this vector has to be delivered in order to identify edges in $\hat{E}_i$. 
We describe how to do a careful load-balancing to identify edges in $\hat{E}_i$ for all $i$ in parallel with the help of \emph{supporter} nodes. 

Partition $V$ into $\{sup(v) \mid v \in V'\}$ where $|sup(v)| = \left \lfloor \frac{\degree(v)}{\rho\sqrt{n}} \right \rfloor$ and $\degree(v)$ is the degree of $v$ with respect to $G'$ and $\rho>1$ is constant such that $|E'| \leq \rho\cdot n^{3/2}$. 
We call nodes in $sup(v)$ as \emph{supporter} nodes of $v$.
Such a partition exists because $\sum_{v\in V'}\degree(v) = 2|E'| \leq \rho n^{3/2}$ for a suitable constant $\rho$. 
Each $v \in V'$ informs all nodes in $sup(v)$ about its $p$-size label vector $\vec{C}(v)$. 
This can be done in $O(1)$ rounds by using \lra:
each node $v$ has $p$ labels to send to at most $\sqrt{n}$ supporter nodes, that is, $O(n)$ messages to send and each supporter node is a receiver of $p$ messages. 
The next task is to distribute the incident edges on $v$ to nodes in $sup(v) = \left\{s^v_1, s^v_2, \ldots s^v_{|sup(v)|}\right\}$. 
Let $E'(v)$ denote edges incident on $v$ in the graph $G'$. 
Partition $E'(v)$ into size-($\rho\sqrt{n} + 1)$ parts. 
Hence there are at most $|sup(v)|$ such parts and let these parts are $E'_1(v), E'_2(v), \ldots, E'_{|sup(v)|}(v)$. 
We can send part $E'_k(v)$  to supporter node $s^v_k$ in $O(1)$ rounds for all $k = 1, 2, \ldots, |sup(v)|$. 
Let $sup_u(v)$ denote the ID of the supporter in $sup(v)$ to which edge $\{u, v\}$ is sent. 
Each node $v$ sends a message to its neighbor $u$ informing about $sup_u(v)$ so that node $u$ knows that $v$ assigned edge $\{u, v\}$ to $sup_u(v)$. 
Then each node $v$ notifies $sup_u(v)$ about $sup_v(u)$ for each incident edge $\{u, v\}$. 
At this stage, each supporter $s^v_k$ knows the supporter of the end-point of all edges in $E'_k(v)$.  
Now to decide which of the edges in $E'_k(v)$ are in $\hat{E}_i$, node $s^v_k$ needs to know $\vec{C}(u)$ for all $u$ such that $\{u , v\} \in E'_k(v)$. 
Node $s^v_k$ requests this information to the corresponding supporter node, that is, for each edge $\{u, v\} \in E'_k(v)$, node $s^v_k$ requests $sup_v(u)$ to send $\vec{C}(u)$. 
Since $|E'_k(v)| = O(\sqrt{n})$, $s^v_k$ needs to receive $O(\sqrt{n}\cdot p)$ messages. 
On the other hand, it needs to send $O(\sqrt{n}\cdot p)$ messages in total. 
Hence this communication can be done in $O(1)$ rounds since $p \leq \sqrt{n}$ using \lra.
At this stage each $s^v_k$ has the necessary information to decide which of the edges in $E'_k(v)$ are in $\hat{E}_i$. 
For each edge in $E'_k(v)$ if it belongs to $\hat{E}_i$ then $s^v_k$ sends this edge to $g(i)$. 
There are $O(\sqrt{n})$ edges in $E'_k(v)$ and there are at most $p \leq \sqrt{n}$ different values of $i$, hence $s^v_k$ has $O(n)$ messages to send. 
Each guardian $g(i)$ needs to receive $\hat{E}_i$ which is of size $O(n)$ by our Sampling Theorem. 
Hence this communication can be done in $O(1)$ rounds. 
We summarize the above description in the Algorithm \textsc{RouteLabels} below.
\begin{algorithm}[H]
  \caption{\textsc{RouteLabels}}
  \begin{algorithmic}[1]
    \REQUIRE Each $v \in V'$ knows the $p$-size label vector $\vec{C}(v)$
    \ENSURE  Each guardian $g(i)$ for $i=1, 2, \ldots, p$ should know inter-component edges $\hat{E}_i$ 
    \STATE   Each vertex $v \in G'$ broadcast its degree $\degree(v)$. Let $\sum_{v \in V'} \degree(v) = \rho n^{\frac{3}{2}}$. 
    \STATE   Each vertex $v$ deterministically (all vertices use the same scheme) partitions $V$ ($n$ nodes)  into $|V'| = \frac{n}{\log^2 n}$ partitions: $\left\{sup(v) \mid v \in V'\right\}$ where
	    $|sup(v)| = \left\lfloor\frac{\degree(v)}{\rho\sqrt{n}}\right\rfloor$. \\
	     $sup(v) = \left\{s^v_1, s^v_2, \ldots, s^v_{|sup(v)|}\right\}$ is the set of \emph{supporter} nodes of $v$. 
    \STATE   Each $v$ sends the $p$-size vector $\vec{C}(v)$ to all nodes in $sup(v)$. 
    \STATE   Let $E'(v)$ be the set of incident edges on $v$ in $G'$. \\
	     Each $v$ partitions $E'(v)$ into $|sup(v)|$ partitions $E'_1(v), E'_2(v), \ldots E'_{|sup(v)|}(v)$ and
	     sends partition $E'_k(v)$ to $s^v_k$ for $k=1, 2, \ldots, |sup(v)|$. 
    \STATE   Let $sup_u(v) \in sup(v)$ denote the node to which $v$ sent edge $\{u, v\}$.\\
	     For each incident edge $\{u, v\}$, $v$ sends a message to $u$ notifying about $sup_u(v)$. 
    \STATE   For each $v \in V'$ and $k=1,2,\ldots, |sup(v)|$, each $s^v_k$ executes the following steps in parallel: 
    \bindent
        \STATE $s^v_k$ sends the $p$-size vector $\vec{C}(v)$ to all $u$ such that $\{u, v\} \in E'_k(v)$.
	\STATE For each $\{u, v\} \in E'_k(v)$ and for each $i=1, 2, \ldots, p$:\\
	\IF {$c_i(v) \neq c_i(u)$} \STATE $s^v_k$ sends $\{u, v\}$ to $g(i)$ \ENDIF 
    \eindent
  \end{algorithmic}
\end{algorithm}
The above discussion shows that each step of Algorithm \textsc{RouteLabels} can be implemented in $O(1)$ rounds w.h.p..

\begin{theorem}
  Algorithm \textsc{Exact-MST} computes a MST of a weighted clique in $O(\log \log \log n)$ rounds with probability at least $1- \frac{1}{n}$ on the Congested Clique. 
\end{theorem}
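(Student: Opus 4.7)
The plan is to separately argue correctness and round complexity, then combine the high-probability events via a union bound at the end.

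For correctness, Theorem~\ref{thm:lotker} tells us that $\mathcal{T}_1$ is contained in some MST of $G$, and (using property~(iii), that no lighter edge crosses any cluster) completing $\mathcal{T}_1$ to a full MST of $G$ is equivalent to computing a minimum spanning forest of the weighted component graph $G_1$. So it suffices to show that $T_2$ returned by Steps~2--5 is an MSF of $G_1$. This is exactly what Lemma~\ref{lemma:kkt} (the KKT Sampling Lemma) delivers: with $H$ obtained by sampling each edge of $G_1$ independently with probability $p = 1/\sqrt{n}$ and $F$ an MSF of $H$, every $F$-heavy edge of $G_1$ lies outside every MSF of $G_1$. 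Hence an MSF of the subgraph of $G_1$ induced by $E_\ell$ is already an MSF of $G_1$, and this is exactly what the second call to \textsc{SQ-MST} computes. Therefore $T_1 \cup T_2$ is an MST of $G$.

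For round complexity, Step~1 takes $O(\log\log\log n)$ rounds by Theorem~\ref{thm:lotker} and produces a component graph $G_1$ with at most $n/\log^2 n$ vertices. Step~2 is a single round of local coin flipping at edge-owners, and a Chernoff bound on the independent $1/\sqrt{n}$ sampling gives $|E(H)| = O(n^{3/2})$ w.h.p. Step~3 then invokes \textsc{SQ-MST} on an input meeting its preconditions ($n/\log^2 n$ vertices, $O(n^{3/2})$ edges), so it finishes in $O(1)$ rounds w.h.p. After that call, $F$ has at most $n/\log^2 n$ edges; distributing these among the $n$ nodes via \lra\ and having each node broadcast its share makes $F$ globally known in $O(1)$ additional rounds. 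Each node then locally classifies each of its incident $G_1$-edges as $F$-light or $F$-heavy by computing $wt_F(\cdot, \cdot)$ on the globally-known tree. Lemma~\ref{lemma:kkt} gives $|E_\ell| = O(n/p) = O(n^{3/2})$ w.h.p., so Step~5 again satisfies the preconditions of \textsc{SQ-MST} and runs in $O(1)$ rounds w.h.p.

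The main obstacle is verifying the two $O(n^{3/2})$-edge preconditions for the \textsc{SQ-MST} calls---one via direct Chernoff on the explicit sampling in Step~2, the other via Lemma~\ref{lemma:kkt} applied to $G_1$---and propagating $F$ globally fast enough that the $F$-light classification can be done purely locally. Each of the finitely many probabilistic events involved (the two sampling concentrations together with the internal w.h.p.~guarantees of the two \textsc{SQ-MST} calls) fails with probability at most $1/\poly(n)$, and a union bound yields overall success probability at least $1 - 1/n$. The total round count is dominated by Step~1's $O(\log\log\log n)$, with every other step contributing only $O(1)$ rounds w.h.p.
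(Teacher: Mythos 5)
Your proposal is correct and follows essentially the same route as the paper, which justifies this theorem implicitly through the Section on \textsc{Exact-MST}: Step~1 via \textsc{CC-MST} for $\log\log\log n + 1$ phases, the KKT sampling lemma to reduce to two \textsc{SQ-MST} calls on graphs with $n/\log^2 n$ vertices and $O(n^{3/2})$ edges, and the guardian/supporter machinery to make each call run in $O(1)$ rounds. Your explicit handling of how $F$ is disseminated so that the $F$-light classification can be done locally is a detail the paper leaves implicit, but it is consistent with how the paper distributes and broadcasts small edge sets elsewhere.
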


\bibliography{mymst}
\end{document}